\documentclass[11pt]{article}
\usepackage{graphicx,subfigure}
\usepackage{amsfonts}
\usepackage{amssymb}
\usepackage{diagbox} 
\usepackage{float}
\usepackage{rotating}
\usepackage{eqparbox}
\usepackage{makecell,caption}
\usepackage[round]{natbib}
\usepackage{amsthm}
\usepackage{amsmath,amsfonts,amssymb,amsthm}
\usepackage[utf8]{inputenc}
\usepackage{hyperref}
\usepackage{multirow}
\usepackage{multicol}
\usepackage{amsthm}
\usepackage{array}
\usepackage{color}
\usepackage{lscape}
\usepackage{booktabs}
\usepackage{multirow}

\usepackage{booktabs,tabularx}
\newcolumntype{x}{>{\raggedright\arraybackslash}X}
\makeatletter
\newcommand{\doublespacing}{\let\CS=\@currsize\renewcommand{\baselinestretch}{1.05}\tiny\CS}
\begin{document}
	\newtheorem{corollary}{Corollary}[section]
	\newtheorem{thm}{Theorem}[section]
	\newtheorem{lemma}{Lemma}[section]
	\newtheorem{remark}{Remark}[section]
	\newtheorem{definition}{Definition}[section]
	\newtheorem{Example}{Example}[section]
	\newtheorem{Remark}{Remark}[section]
	\numberwithin{equation}{section}
	
	\newtheorem{notation}{Notation}[section]
	\newtheorem{proposition}{Proposition}[section]

	\renewcommand{\theequation}{\thesection.\arabic{equation}}
	\doublespacing
	\vspace{5cm}
	\title{\bf Weighted cumulative past inaccuracy and Kullback-Leibler divergence based on extropy: properties, estimation, and applications}
	\author{
		\large Bighneswar \textbf{Sahoo}\thanks{E-mail: bsahoo7500@gmail.com}~ and~
		\large Suchandan \textbf{Kayal}\thanks{Corresponding author. E-mail: kayals@nitrkl.ac.in, suchandan.kayal@gmail.com} 
	}
	\date{}
	\maketitle
	\noindent{\it Department of Mathematics, National Institute of Technology Rourkela, Rourkela-769008, Odisha, India.}
	\vspace*{.5cm}
	
	\begin{center}
		\noindent{\bf Abstract}	
	\end{center}
		This study develops a weighted framework for measuring the discrepancy between two nonnegative lifetime distributions through cumulative past extropy. We propose two measures, referred
		to as the weighted cumulative past extropy inaccuracy (WCPEI) and the weighted cumulative past extropy Kullback-Leibler divergence (WCPED). The generalized weight function is considered in this study.  We investigate a number of theoretical properties of these measures. Empirical distribution function-based nonparametric estimator is subsequently constructed for the weighted cumulative past extropy inaccuracy ration (WCPEIR). Its finite-sample behavior is studied through Monte Carlo simulation experiments for different sample sizes. To illustrate the practical relevance of the WCPED, two applications are considered. First, an extropy-based goodness-of-fit procedure for
		testing uniformity is developed using the proposed divergence measure. Its power is then compared with that of several established uniformity tests under a variety of alternatives. Second, an image analysis application is presented in which the proposed measure is employed to assess changes in the distributions of pixel intensities
		when the image resolution is altered. The framework is further extended to a dynamic setting by conditioning
		on the lifetime information available up to a specified time point. This leads to the dynamic weighted cumulative past extropy inaccuracy (DWCPEI) and dynamic weighted cumulative past extropy divergence (DWCPED). Their theoretical properties are derived, and the
		corresponding nonparametric estimation procedures are proposed. The finite-sample performance of these estimators is evaluated through simulation studies using $R$ software.

%
%

\section{Introduction}\label{sec1}
Inaccuracy measures play a fundamental role in information theory and statistics by quantifying the discrepancy between two probability distributions. Unlike classical uncertainty measures, which describe the randomness of a single distribution, inaccuracy measures evaluate the loss of information incurred when one probability model is used to represent another. The pioneering work of \cite{kerridge1961inaccuracy} introduced the concept of inaccuracy based on the probability density functions (PDFs), establishing an important connection between entropy and discrimination information. Let $X_1$ and $X_2$ be two absolutely continuous nonnegative random
variables (RVs) with cumulative distribution functions (CDFs) $F_1(\cdot)$
and $F_2(\cdot)$, and PDFs $f(\cdot)$ and $g(\cdot)$, respectively. The classical Kerridge
inaccuracy measure is then defined by
\begin{equation}
	I(X_1,X_2)
	=
	-\int_0^\infty f(y)\log g(y)\,dy= S(X_1) + K(X_1, X_2),
\end{equation}
where the Shannon entropy (see \cite{shannon1948mathematical}) and the Kullback Leibler (KL) divergence (see \cite{kullback1951information}) are respectively given by
\begin{equation}
S(X_1) = -\int_0^\infty f(y) \log f(y) \, dy~~\mbox{and}~~K(X_1, X_2) = \int_0^\infty f(y) \log \frac{f(y)}{g(y)} \, dy.
\end{equation}
A fundamental property of the inaccuracy measure is that it reduces to the Shannon's uncertainty measure when the two distributions are identical.

In recent years, extropy has emerged as a complementary measure of
uncertainty to the Shannon entropy. Following \cite{lad2015extropy}, the extropy of  $X_1$, with PDF $f(\cdot)$, is defined as
\begin{equation}
	J(X_1)=-\frac{1}{2}\int_{0}^{\infty} f^2(y)\,dy.
\end{equation}
This measure captures the ``spread'' or ``dispersion'' of a distribution in a manner analogous to entropy but with distinct theoretical properties. The extropy of a distribution attains its maximum for the uniform distribution, contrasting with entropy which attains its maximum under the same condition. \cite{qiu2017extropy} characterized the extropy of order statistics and record values, while \cite{qiu2018extropy} provided estimators for the extropy of a continuous RV.
Building upon the extropy framework, \cite{kundu2023cumulative} introduced cumulative past extropy  based on the CDF  ${F_{1}}(\cdot)$ and its dynamic version as
\begin{equation}\label{eq1.4}
	J_p(X_1) = -\frac{1}{2} \int_0^\infty {F_{1}}^2(y) \, dy~\mbox{and}~J_p(X_1; t) = -\frac{1}{2} \int_0^t \left( \frac{{F_{1}}(y)}{F_{1}(t)} \right)^2 \, dy,
\end{equation}
respectively.
The CDF-
based measures offer the advantage of being well-defined even in cases where the PDF may not exist, making them more versatile in practical applications.
In parallel developments, \cite{hashempour2024new} proposed extropy-based inaccuracy measure for record statistics, defined as:
\begin{equation}\label{eq1.8}
	J(f, g) = -\frac{1}{2} \int_0^\infty f(y)g(y) \, dy
\end{equation}
and the extropy-based discrimination information:
\begin{equation}\label{eq1.9}
	J(f|g) = \frac{1}{2} \int_0^\infty [f(y) - g(y)]f(y) \, dy.
\end{equation}
The development of cumulative information measures has attracted considerable attention because CDFs are generally smoother and more robust than the PDF. Consequently, cumulative entropy, cumulative residual entropy, cumulative extropy, and cumulative past extropy have emerged as useful alternatives to their density-based counterparts. Among these, the cumulative past extropy has proved particularly valuable for analyzing the uncertainty associated with elapsed lifetimes, making it naturally suited for applications involving left-censored observations, repairable systems, maintenance planning, medical follow-up studies, and reliability investigations where the past lifetime information is of primary interest. Recently, the cumulative inaccuracy and divergence measures based on extropy have been introduced and studied in the literature. In this direction, we refer to \cite{hashempour2024dynamic}, \cite{saranya2024relative},  \cite{saranya2025inaccuracy}, \cite{hashempour2026dynamic}, and \cite{saranya2026relative}. Dynamic versions have also been developed by conditioning on survival or failure before a specified time point, thereby enabling localized comparisons of distributions over time. Such conditional measures are especially useful in reliability and survival analysis, where the behavior of components often changes with age. 

Compared with the extensive literature on entropy-based divergence measures, extropy-based counterparts remain less explored. This paper seeks to expand this line of research by introducing effective  measures based on extropy. The contributions of this paper refer to
\begin{itemize}
	\item[(i)] Introduce the WCPEI measure with generalized weight function, and study its various theoretical  properties. The WCPEIR is also proposed. The empirical distribution functions (EDFs) are used for the estimation of the WCPEIR. It is proved that the proposed estimator is consistent. A simulation study is conducted to see the performance of the proposed estimator. 
	
	\item[(ii)] Address the WCPED and discuss its properties. Based on the WCPED, the uniformity test procedure has been proposed. The critical values are computed, and then the power analysis is reported using the simulation study. An application of the WCPED is illustrated using image analysis. 
	
		\item[(iii)] Propose DWCPEI and DWCPED measures, and then explore their important properties. Kernel-based nonparametric estimators are introduced for estimating these measures. Further, a simulation study is conducted to see the performance of the proposed nonparametric estimators. 
\end{itemize}

The remainder of the paper is organized as follows. Section \ref{sec2} introduces the WCPEI measure and establishes its mathematical properties. An estimator of the WCPEIR is proposed, and its finite-sample performance is investigated through simulation studies. Section \ref{sec3} introduces the WCPED measure, including some properties. Two applications are considered in this section: a goodness-of-fit test for uniformity and an image analysis. Section \ref{sec4} introduces dynamic versions of the proposed measures based on the past lifetime RVs. Some theoretical properties are investigated, followed by the development of corresponding estimation procedures. The performance of the proposed estimators is further examined through simulation studies under various sample sizes and distributional settings. Finally, Section \ref{sec5}
concludes the work.

Throughout the paper, we assume that the differentiations and integrations
exits whenever they are reported.

\section{Weighted cumulative past extropy-inaccuracy measure}\label{sec2}
In many practical reliability and survival studies, different portions of the past lifetime may contribute unequally to the information content of a system. For example, failures occurring at earlier ages may be less significant than those occurring closer to the observation time, or vice versa, depending on the application. To cope up such situations, we introduce a weighted version of the cumulative past extropy-inaccuracy measure by incorporating a nonnegative weight function in the concept of cumulative past extropy-inaccuracy framework. Consider two nonnegative absolutely continuous RVs $X_1$ and $X_2$, with corresponding CDFs $F_1(\cdot)$ and $F_2(\cdot)$. Suppose that $w:[0,\infty)\rightarrow[0,\infty)$ is a measurable weight function satisfying
\[
\int_{0}^{\infty} w(y)F_1(y)F_2(y)\,dy<\infty.
\]
The proposed WCPEI measure is defined below.
\begin{definition}
	Let $X_{1}$ and $X_{2}$ be two nonnegative continuous RVs with CDFs $F_{1}(\cdot)$ and $F_{2}(\cdot)$, respectively. The WCPEI between $F_{1}(\cdot)$ and $F_{2}(\cdot)$ is defined as
	\begin{equation}
		I_{wp}(X_1,X_2)
		=-\frac12\int_{0}^{\infty}w(y)F_{1}(y)F_{2}(y)\,dy.
		\label{eq2.1}
	\end{equation}
\end{definition}
The WCPEI measure reduces to the measure proposed by \cite{hashempour2026dynamic} when $w(y)=y$. Clearly, $I_{wp}(X_1,X_2)=I_{wp}(X_2,X_1)$, that is the WCPEI is symmetric. The measure $I_{wp}(X_1, X_2)$ provides a weighted assessment of the disagreement between the past lifetime distributions of $X_{1}$ and $X_{2}$. The weight function $w(\cdot)$ allows different regions of the support to contribute differently to the overall measure according to their practical importance. Consequently, the proposed measure offers greater flexibility than the ordinary cumulative past extropy-inaccuracy measure. When the two distributions are identical, that is $F_{1}(\cdot)=F_{2}(\cdot)$, Equation (\ref{eq2.1}) reduces to the weighted cumulative past extropy (WCPE) of $X_1$, given by
\begin{equation}\label{eq2.2}
	I_{wp}(X_1,X_1)=J_{wp}(X_1)=-\frac12\int_{0}^{\infty}w(y)F_{1}^2(y)\,dy.
\end{equation}
\begin{Example}
	Consider a RV $X_{1}$ having the CDF $F_1(y)=y,\quad 0<y<1,$
	which represents the true model that generated the observed data. Suppose two competing models $X_2$ and $X_3$ are proposed with respective CDFs $F_{2}(y)=y^{2}, ~ F_{3}(y)=y^{3}, ~ 0<y<1.$ Using the weight function $w(y)=y$, the WCPEI measure defined in (\ref{eq2.1}) yields $I_{wp}(X_1,X_1)=-\frac18=-0.125,$ $I_{wp}(X_1, X_2)=-\frac1{10}=-0.100,$
	and	$I_{wp}(X_1, X_3)=-\frac1{12}\approx-0.0833.$
	Observe that
	\[
	\left|I_{wp}(X_{1},X_{1})-I_{wp}(X_{1},X_{2})\right|
	<
	\left|I_{wp}(X_1, X_1)-I_{wp}(X_1,X_3)\right|.
	\]
	Hence, the WCPEI  associated with $F_2(\cdot)$ is closer to the weighted cumulative past extropy of the true distribution than that associated with $F_3(\cdot)$. This indicates that the distribution $F_2(y)=y^{2}$ provides a more accurate approximation to the true model $F_1(y)=y$ than the distribution $F_3(y)=y^{3}$ under the proposed weighted measure.
\end{Example}
\begin{lemma}
    Let $X_2=aX_1+b$, where $a>0$ and $b\geq0$. Then,	$F_2(y)=F_1\!\left(\frac{y-b}{a}\right).$ Consequently, the WCPEI measure becomes
	\begin{equation}
		I_{wp}(X_1,X_2)=-\frac{1}{2}\int_{0}^{\infty}w(y)\,
		F_1(y)\,F_1\!\left(\frac{y-b}{a}\right)dy.
		\label{eq:lemma_wpei}
	\end{equation}
\end{lemma}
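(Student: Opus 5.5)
The argument has two steps. First we identify the distribution function of $X_2$ by working from the definition of a CDF, and then we substitute it into the definition of the WCPEI in (\ref{eq2.1}). For any $y\ge 0$ we have
\[
F_2(y)=P(X_2\le y)=P(aX_1+b\le y)=P\!\left(X_1\le \tfrac{y-b}{a}\right)=F_1\!\left(\tfrac{y-b}{a}\right).
\]
The third equality uses $a>0$: dividing by a positive constant preserves the direction of the inequality, so the events $\{aX_1+b\le y\}$ and $\{X_1\le (y-b)/a\}$ coincide. Since $b\ge 0$ and $X_1$ is nonnegative, $X_2$ is again a nonnegative absolutely continuous RV, so it belongs to the class for which (\ref{eq2.1}) is defined.

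We should also check the range $0\le y<b$. There $(y-b)/a<0$, so $F_1\!\left(\frac{y-b}{a}\right)=0$ because $X_1\ge 0$. This agrees with $F_2(y)=0$, since $X_2\ge b$. Hence the formula holds on all of $[0,\infty)$, and no piecewise splitting of the integral is needed.

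Substituting this expression for $F_2(\cdot)$ into (\ref{eq2.1}) gives (\ref{eq:lemma_wpei}) directly. The integrability condition $\int_0^\infty w(y)F_1(y)F_2(y)\,dy<\infty$ imposed before the definition is inherited unchanged, because the integrand is literally the same function.

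The only step needing care is the sign of $a$ in inverting the inequality, together with the region $y<b$. Both are handled above, so I expect no real obstacle. Optionally, one could remark that the change of variable $u=(y-b)/a$ gives the alternative form $-\frac{a}{2}\int_{-b/a}^{\infty}w(au+b)F_1(au+b)F_1(u)\,du$. This is not required for the statement.
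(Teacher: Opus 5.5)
Your proof is correct and follows the paper's argument: compute $F_2(y)=P(aX_1+b\le y)=F_1\!\left(\frac{y-b}{a}\right)$ using $a>0$, then substitute into the definition \eqref{eq2.1}. Your check of the range $0\le y<b$ and your change-of-variable form are both accurate; the paper does not spell out either.
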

\begin{proof}
	Since $X_2=aX_1+b$, we have
	$P(X_2\leq y)=P(aX_1+b\leq y)=P\!\left(X_1\leq\frac{y-b}{a}\right).$ Thus, $F_2(y)=F_1\!\left(\frac{y-b}{a}\right).$
	Substituting this expression in the definition of $I_{wp}(X_1,X_2)$ yields
	\begin{equation}
	I_{wp}(X_1,X_2)=-\frac{1}{2}\int_{0}^{\infty}w(y)\,F_1(y)\,
	F_1\!\left(\frac{y-b}{a}\right)dy,
	\end{equation}
	which proves the result.
\end{proof}
The next theorem develops sufficient condition for WCPEI to be finite.
\begin{thm}
	Suppose that $X_{1}$ and $X_{2}$ are two nonnegative continuous RVs with CDFs $F_{1}(\cdot)$ and $F_{2}(\cdot)$, respectively. Further, suppose that the weight function $w(\cdot)$ satisfies	$\int_{0}^{\infty} w(y)\,dy<\infty.$	Then, $
	I_{wp}(X_1,X_2)
	$
	is finite. In particular, $I_{wp}(X_1,X_2)\in(-\infty,0].$
\end{thm}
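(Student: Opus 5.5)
The plan is a direct comparison argument that uses only the fact that CDFs take values in $[0,1]$. For every $y\ge 0$ we have $0\le F_1(y)\le 1$ and $0\le F_2(y)\le 1$. Since the weight function is nonnegative, the integrand satisfies the pointwise sandwich
\[
0\le w(y)F_1(y)F_2(y)\le w(y),\qquad y\ge 0 .
\]
The integrand is a product of measurable functions (the CDFs are monotone, hence measurable), so it is nonnegative and measurable. The integral $\int_0^\infty w(y)F_1(y)F_2(y)\,dy$ is therefore well defined as a Lebesgue integral with value in $[0,\infty]$.

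Next, integrate the sandwich over $[0,\infty)$ and use monotonicity of the integral for nonnegative functions. This gives
\[
0\le \int_{0}^{\infty} w(y)F_1(y)F_2(y)\,dy\le \int_{0}^{\infty} w(y)\,dy<\infty,
\]
where the last inequality is the hypothesis on $w(\cdot)$. Two consequences follow. First, the integrability condition assumed when $I_{wp}$ was introduced in Section \ref{sec2} holds automatically. Second, multiplying by $-\tfrac12$ in (\ref{eq2.1}) gives
\[
-\frac12\int_0^\infty w(y)\,dy\;\le\; I_{wp}(X_1,X_2)\;\le\; 0 .
\]
Hence $I_{wp}(X_1,X_2)$ is finite and lies in $(-\infty,0]$. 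The lower bound $-\frac12\int_0^\infty w(y)\,dy$ is explicit and depends only on the weight.

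There is no real obstacle here; the only point needing care is justifying that the integral exists before comparing. That is handled by nonnegativity and measurability, so no absolute-integrability issue arises. I would also note, as a remark, that the upper bound $0$ is attained only when $wF_1F_2=0$ almost everywhere. This shows that the closed endpoint $0$ in the stated interval is appropriate.
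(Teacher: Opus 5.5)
Your proof is correct and follows the same route as the paper. Both arguments integrate the pointwise bound $0\le w(y)F_1(y)F_2(y)\le w(y)$ against the hypothesis $\int_0^\infty w(y)\,dy<\infty$ and then multiply by $-\frac{1}{2}$. Your remarks on measurability and the explicit lower bound $-\frac{1}{2}\int_0^\infty w(y)\,dy$ are harmless refinements of that argument.
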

\begin{proof}
	Since $F_1(\cdot)$ and $F_2(\cdot)$ are CDFs, we have $0\leq F_1(y)\leq1$ and
	$0\leq F_2(y)\leq1,$ for every $y\geq0$. Thus,
	\[
	0\leq
	\int_{0}^{\infty}
	w(y)F_1(y)F_2(y)\,dy
	\leq
	\int_{0}^{\infty}
	w(y)\,dy.
	\]
	By assumption, $\int_{0}^{\infty}w(y)\,dy<\infty,$ and hence
	$\int_{0}^{\infty}w(y)F_1(y)F_2(y)\,dy <\infty.$
	It follows that
	\[
	I_{wp}(X_1,X_2)=-\frac12 \int_{0}^{\infty}
	w(y)F_1(y)F_2(y)\,dy
	\]
	is finite. Moreover, since the integral is nonnegative,
	$I_{wp}(X_1,X_2)\leq0.$
	Hence, $I_{wp}(X_1,X_2)\in(-\infty,0].$
	This completes the proof.
\end{proof}
The below theorem provides an effect of the ordering of two weight functions on the WCPEI.
\begin{thm}
    Let $F_1(\cdot)$ and $F_2(\cdot)$ be the CDFs of the RVs $X_1$ and $X_2,$ respectively. Suppose $w_1(\cdot)$ and $w_2(\cdot)$ are two weight functions and satisfy $w_1(\cdot)\leq w_2(\cdot).$ Then, $I_{w_1p}(X_1,X_2)\geq I_{w_2p}(X_1,X_2).$
\end{thm}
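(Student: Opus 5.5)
The plan is a pointwise comparison of integrands followed by monotonicity of the integral; the factor $-\frac12$ then reverses the inequality. Since $F_1(\cdot)$ and $F_2(\cdot)$ are CDFs, we have $F_1(y)F_2(y)\geq 0$ for every $y\geq 0$. Multiplying the hypothesis $w_1(y)\leq w_2(y)$ by this nonnegative quantity gives
\[
0\leq w_1(y)F_1(y)F_2(y)\leq w_2(y)F_1(y)F_2(y),\qquad y\geq 0 .
\]
The lower bound uses the fact that weight functions map into $[0,\infty)$.

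Next, integrate over $[0,\infty)$. Both integrands are nonnegative and measurable, so the Lebesgue integrals exist in $[0,\infty]$ and monotonicity of the integral gives
\[
\int_0^\infty w_1(y)F_1(y)F_2(y)\,dy\leq \int_0^\infty w_2(y)F_1(y)F_2(y)\,dy .
\]
The standing assumption that the defining integral is finite for the weight in use (as in the definition of $I_{wp}$, or guaranteed by the preceding theorem when $\int_0^\infty w_2(y)\,dy<\infty$) makes the right-hand side finite. The comparison then also shows that the left-hand side is finite, so both measures are well defined real numbers.

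Finally, multiply both sides by $-\frac12<0$, which reverses the inequality:
\[
I_{w_1p}(X_1,X_2)=-\frac12\int_0^\infty w_1F_1F_2\,dy\;\geq\; -\frac12\int_0^\infty w_2F_1F_2\,dy=I_{w_2p}(X_1,X_2).
\]
There is no real obstacle here. The only point needing care is finiteness: if the $w_2$-integral were infinite, the inequality would still hold in the extended sense with $I_{w_2p}=-\infty$.
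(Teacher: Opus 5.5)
Your argument is correct and is exactly the routine one the paper has in mind when it omits the proof as ``simple'': multiply $w_1\le w_2$ by $F_1F_2\ge 0$, integrate, and reverse the inequality with the factor $-\frac12$. Your remark on finiteness, including the extended-valued case $I_{w_2p}=-\infty$, is a sensible addition that the paper does not address.
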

\begin{proof}
	The proof is simple, and hence it is omitted.
\end{proof}
\begin{lemma}\label{lemma2.2}
	Consider $X_1$ and $X_2$ be two nonnegative continuous RVs with CDFs $F_1(\cdot)$ and $F_2(\cdot)$, respectively. Suppose that the weight function $w(\cdot)$ is nonnegative and satisfies $\int_{0}^{\infty} w(y)F_{1}^{2}(y)\,dy<\infty$ and $\int_{0}^{\infty} w(y)F_{2}^{2}(y)\,dy<\infty.$
	Then, the WCPEI satisfies
	\[
	\left|I_{wp}(X_1,X_2)\right|
	\le
	\sqrt{
		I_{wp}(X_1, X_1)\,
		I_{wp}(X_2, X_2)}
		= \sqrt{
			J_{wp}(X_1)\,
			J_{wp}(X_2)},
	\]
	where $J_{wp}(X_1)$ is defined in Equation (\ref{eq2.2}).
\end{lemma}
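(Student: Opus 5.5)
The plan is to apply the Cauchy--Schwarz inequality in the weighted space $L^2(w(y)\,dy)$ on $[0,\infty)$. Since $w(\cdot)\geq 0$, we may write
\[
w(y)F_1(y)F_2(y)=\bigl(\sqrt{w(y)}F_1(y)\bigr)\bigl(\sqrt{w(y)}F_2(y)\bigr).
\]
The two hypotheses $\int_0^\infty w(y)F_1^2(y)\,dy<\infty$ and $\int_0^\infty w(y)F_2^2(y)\,dy<\infty$ say exactly that both factors $\sqrt{w}F_1$ and $\sqrt{w}F_2$ lie in $L^2[0,\infty)$.

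Cauchy--Schwarz therefore yields
\[
0\le\int_0^\infty w(y)F_1(y)F_2(y)\,dy\le\Bigl(\int_0^\infty w(y)F_1^2(y)\,dy\Bigr)^{1/2}\Bigl(\int_0^\infty w(y)F_2^2(y)\,dy\Bigr)^{1/2}.
\]
In particular, the integral defining $I_{wp}(X_1,X_2)$ in (\ref{eq2.1}) is finite, so the measure is well defined under these assumptions.

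Next, multiply the inequality by $\tfrac12$. The left side then becomes $|I_{wp}(X_1,X_2)|$, because the integrand is nonnegative and so $I_{wp}(X_1,X_2)\le 0$. On the right side, $\tfrac12$ can be written as $\sqrt{\tfrac12}\cdot\sqrt{\tfrac12}$ and absorbed into each square root. This gives
\[
\tfrac12\Bigl(\int_0^\infty wF_1^2\,dy\Bigr)^{1/2}\Bigl(\int_0^\infty wF_2^2\,dy\Bigr)^{1/2}=\sqrt{|J_{wp}(X_1)|\,|J_{wp}(X_2)|}.
\]
Both $J_{wp}(X_1)$ and $J_{wp}(X_2)$ from (\ref{eq2.2}) are nonpositive, so their product is nonnegative and equals $J_{wp}(X_1)J_{wp}(X_2)=I_{wp}(X_1,X_1)\,I_{wp}(X_2,X_2)$. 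This is the stated bound.

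The argument has no real obstacle. The only point needing care is the sign bookkeeping: the right-hand side is the square root of a product of two negative (nonpositive) quantities, and this is legitimate precisely because the product is nonnegative.
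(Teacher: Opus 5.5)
Your proof is correct and follows the same route as the paper: apply Cauchy--Schwarz to $\sqrt{w}\,F_1$ and $\sqrt{w}\,F_2$, then translate the result into the bound on $|I_{wp}(X_1,X_2)|$. You also spell out the factor $\tfrac12$, writing $\tfrac12\sqrt{AB}=\sqrt{(A/2)(B/2)}$, and note that $J_{wp}(X_1)J_{wp}(X_2)\ge 0$ because both factors are nonpositive; the paper leaves this bookkeeping implicit.
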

\begin{proof}
	Taking absolute value both sides of (\ref{eq2.1}), we get
	\begin{equation}
	\left|I_{wP}(X_1, X_2)\right|=\frac12\left|\int_{0}^{\infty}
	w(y)F_1(y)F_2(y)\,dy
	\right|.
	\end{equation}
	Further, applying Cauchy--Schwarz inequality,
	\begin{equation}
	\left(\int_{0}^{\infty}w(y)F_1(y)F_2(y)\,dy\right)^2
	\le \left(\int_{0}^{\infty} w(y)F_1^{2}(y)\,dy\right)
	\left(\int_{0}^{\infty}w(y)F_2^{2}(y)\,dy\right).
	\end{equation}
	Thus, the result follows.
\end{proof}
As the WCPEI measure is always nonpositive, it is convenient to introduce a normalized version that provides a nonnegative and scale-free measure of discrepancy between two lifetime distributions. Motivated by this idea, we define the WCPEIR, which compares the WCPEI with the  WCPE of the reference distribution.
\begin{definition}
	Let $X_1$ and $X_2$ be two nonnegative continuous RVs with CDFs $F_1(\cdot)$ and $F_2(\cdot)$, respectively, and let $w(\cdot)$ be a nonnegative weight function. Then, the WCPEIR between $F_1(\cdot)$ and $F_2(\cdot)$ is defined as
	\begin{equation}
		{IR}_{wp}(X_1,X_2)=\frac{I_{wp}(X_1,X_2)}
		{J_{wp}(X_1)},~~J_{wp}(X_1)\ne 0,
		\label{eq2.7}
	\end{equation}
	where $I_{wp}(X_1, X_2)$ denotes the WCPEI measure,  defined in (\ref{eq2.1}) and $J_{wp}(X_1)$ is the WCPE associated with the reference distribution $F_1(\cdot)$ and is defined in (\ref{eq2.2}).
\end{definition}
Since $I_{wp}(X_1,X_2)$ is nonpositive and $J_{wp}(X_1)$ is negative, the ratio $IR_{wp}(X_1,X_2)$ is nonnegative. Furthermore, $IR_{wp}(X_1,X_2)=1$ whenever $F_1(\cdot)=F_2(\cdot)$. Thus, the proposed ratio provides a dimensionless measure of similarity between two lifetime distributions. The values close to one indicate that two distributions exhibit comparable weighted past lifetime behaviour, whereas values departing from one reflect increasing disagreement between them. It is worth noting that in general $IR_{wp}(X_1,X_2)\neq IR_{wp}(X_2,X_1),$
and hence the measure is not symmetric. The WCPEIR may therefore be interpreted as the relative loss of weighted past information when the true distribution $F_1(\cdot)$ is represented by another distribution $F_2(\cdot)$. We remark that the sufficient condition mentioned in the upcoming result may be replaced by the usual stochastic order (see \cite{shaked2007stochastic}).
\begin{thm}
	Let $F_1(\cdot)$ and $F_2(\cdot)$ be two CDFs of  $X_1$ and $X_2$, respectively. If $F_1(y)\geq F_2(y)$, then $IR_{wp}(X_1, X_2)\le IR_{wp}(X_2, X_1)$.
	\end{thm}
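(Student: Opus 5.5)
The plan is to reduce the comparison of the two ratios to a comparison of their denominators. The numerators coincide because the WCPEI is symmetric, as noted right after (\ref{eq2.1}). Write
\[
A=\int_0^\infty w(y)F_1(y)F_2(y)\,dy,\qquad B_1=\int_0^\infty w(y)F_1^2(y)\,dy,\qquad B_2=\int_0^\infty w(y)F_2^2(y)\,dy .
\]
From (\ref{eq2.7}) we then have $IR_{wp}(X_1,X_2)=A/B_1$ and $IR_{wp}(X_2,X_1)=A/B_2$, since the factors $-\tfrac12$ cancel. All three quantities are nonnegative because $w\ge 0$. Moreover, $B_1,B_2>0$ is implicit in the definition of the ratio, which requires $J_{wp}(X_1)\neq 0$ and $J_{wp}(X_2)\neq 0$.

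The key step is a pointwise comparison under the hypothesis $F_1(y)\ge F_2(y)$ for all $y$, that is, $X_1\le_{st}X_2$ in the usual stochastic order. Since $0\le F_2\le F_1$, we get $F_2^2(y)\le F_1^2(y)$ for every $y\ge0$. Multiplying by $w(y)\ge 0$ and integrating gives $B_2\le B_1$. This is the same monotonicity argument as in the theorem comparing two weight functions, but applied to the CDF factor rather than the weight.

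The conclusion then follows directly. Since $A\ge 0$ and $0<B_2\le B_1$, we have $A/B_1\le A/B_2$, which is exactly $IR_{wp}(X_1,X_2)\le IR_{wp}(X_2,X_1)$.

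There is no real obstacle in this argument. The only point needing care is bookkeeping. The signs must be tracked correctly: both $I_{wp}$ and $J_{wp}$ are nonpositive, so the ratio equals the ratio of the positive integrals. The finiteness and nonvanishing of $B_1$ and $B_2$ must also be assumed, for instance via $\int_0^\infty w(y)\,dy<\infty$ as in the earlier finiteness theorem together with $J_{wp}\ne0$. As an optional check, one can sandwich the ratios using the Cauchy--Schwarz bound of Lemma \ref{lemma2.2}, which gives $A^2\le B_1B_2$. Hence $IR_{wp}(X_1,X_2)\le\sqrt{B_2/B_1}\le 1\le \sqrt{B_1/B_2}$. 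Note that this does not by itself compare $A/B_2$ with $1$, so the main argument above is what proves the result.
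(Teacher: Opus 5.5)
Your proof is correct and is essentially the paper's argument: the numerators coincide by symmetry of $I_{wp}$, and $F_1^2\ge F_2^2$ gives $J_{wp}(X_1)\le J_{wp}(X_2)$, which is your $B_1\ge B_2$. Rewriting both ratios as $A/B_1$ and $A/B_2$ with nonnegative integrals simply makes explicit the sign bookkeeping that the paper leaves implicit in its final step.
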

	\begin{proof}
		We have $F_1(y) \ge F_2(y) \implies F_1^2(y) \ge F_2^2(y) \implies J_{wp}(X_1)\le J_{wp}(X_2).$
		Since $I_{wp}(X_1, X_2)=I_{wp}(X_2, X_1)$, then we get 
		$IR_{wp}(X_1, X_2)\le IR_{wp}(X_2, X_1)$.
	\end{proof}
	\begin{thm}
		Consider two RVs $X_1$ and $X_2$  with CDFs $F_1(\cdot)$ and $F_2(\cdot)$, respectively. Let $w(\cdot)$ be a
		nonnegative weight function satisfying
		$\int_{0}^{\infty}w(y)F_1^2(y)\,dy<\infty$
		and $\int_{0}^{\infty}w(y)F_2^2(y)\,dy<\infty.$
		Then, the WCPEIR satisfies
		\begin{equation}
		IR_{wp}(X_1,X_2)\le \sqrt{\frac{J_{wp}(X_2)}{J_{wp}(X_1)}}.
		\end{equation}
	\end{thm}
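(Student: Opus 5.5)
The plan is to combine the definition of the WCPEIR in (\ref{eq2.7}) with the Cauchy--Schwarz bound of Lemma \ref{lemma2.2}. All the quantities involved have definite signs, so the argument reduces to rewriting the ratio in terms of absolute values.

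First, $w(\cdot)\ge 0$ and $0\le F_i\le 1$. Hence $I_{wp}(X_1,X_2)\le 0$ and $J_{wp}(X_i)\le 0$ for $i=1,2$. The integrability hypotheses make $J_{wp}(X_1)$ and $J_{wp}(X_2)$ finite. By Lemma \ref{lemma2.2}, they also make $I_{wp}(X_1,X_2)$ finite. Since the ratio is defined only when $J_{wp}(X_1)\neq 0$, we have $J_{wp}(X_1)<0$. We may therefore write
\[
IR_{wp}(X_1,X_2)=\frac{|I_{wp}(X_1,X_2)|}{|J_{wp}(X_1)|}.
\]

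Second, we apply Lemma \ref{lemma2.2}, which gives $|I_{wp}(X_1,X_2)|\le \sqrt{J_{wp}(X_1)J_{wp}(X_2)}$. The product under the root is nonnegative because both factors are nonpositive. Rewriting the root as $\sqrt{|J_{wp}(X_1)|\,|J_{wp}(X_2)|}$ and dividing by $|J_{wp}(X_1)|>0$ yields
\[
IR_{wp}(X_1,X_2)\le \sqrt{\frac{|J_{wp}(X_2)|}{|J_{wp}(X_1)|}}=\sqrt{\frac{J_{wp}(X_2)}{J_{wp}(X_1)}}.
\]
The last equality holds because the two quantities have the same sign.

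The only point needing care is this sign bookkeeping: converting the negative denominator into an absolute value, and checking that the ratio under the square root is nonnegative, so that the bound is a real number. A degenerate case also needs a remark. If $J_{wp}(X_2)=0$, then $wF_2^2=0$ almost everywhere. Hence $wF_1F_2=0$ almost everywhere, so $I_{wp}(X_1,X_2)=0$ and the inequality holds trivially.
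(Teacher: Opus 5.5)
Your proof is correct and follows the paper's argument. Both apply the Cauchy--Schwarz bound of Lemma \ref{lemma2.2} to the numerator of (\ref{eq2.7}) and use the sign facts $I_{wp}(X_1,X_2)\le 0$ and $J_{wp}(X_1)<0$; the paper squares the ratio and takes square roots at the end, while you work directly with absolute values, which amounts to the same thing (and your degenerate case $J_{wp}(X_2)=0$ is already covered by the lemma).
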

	\begin{proof}
		From the definition of the WCPEIR,	we have
		\begin{equation}\label{eq2.9}
		IR_{wp}(X_1,X_2)=\frac{I_{wp}(X_1,X_2)}
		{J_{wp}(X_1)}\Rightarrow IR_{wp}^{\,2}(X_1,X_2)=\frac{I_{wp}^{\,2}(X_1,X_2)}
		{J_{wp}^{\,2}(X_1)}.
		\end{equation}
		Applying the result in Lemma \ref{lemma2.2}, we get
		\begin{equation}
		I_{wp}^{\,2}(X_1,X_2)
		\le
		J_{wp}(X_1)\,
		J_{wp}(X_2),
		\end{equation}
		which immediately yields using (\ref{eq2.9}),
		\[
		IR_{wp}^{\,2}(X_1,X_2)
		\le
		\frac{J_{wp}(X_1)\,J_{wp}(X_2)}
		{J_{wp}^{\,2}(X_1)}
		=
		\frac{J_{wp}(X_2)}
		{J_{wp}(X_1)}.
		\]
		Since the WCPEIR is nonnegative, taking square roots on both sides gives
		\[
		I_{wp}(X_1,X_2)
		\le
		\sqrt{\frac{J_{wp}(X_2)}
			{J_{wp}(X_1)}}
		\]
		completing the proof.
	\end{proof}
	
	\subsection{Estimation of the WCPEIR}
	Suppose that $Y_1,\cdots,Y_n$ constitute a random sample from a population with CDF $F_1(\cdot)$, while $Z_1,\cdots,Z_n$ constitute an independent random sample from a population with CDF $F_2(\cdot)$. Let $Y_{(1)}\leq\cdots\leq Y_{(n)}$ and
	$Z_{(1)}\leq\cdots\leq Z_{(n)}$ denote the corresponding order statistics.
	Further, let
	\begin{equation}\label{eq2.10}
	F_{1n}(y)=\frac{1}{n}\sum_{i=1}^{n}I(Y_i\le y),
	\qquad
	F_{2n}(y)=\frac{1}{n}\sum_{i=1}^{n}I(Z_i\le y)
	\end{equation}
	be the EDFs of $X_1$ and $X_2$, respectively. Replacing the population distribution functions by their empirical counterparts, given in (\ref{eq2.10}) and approximating the integral by a Riemann sum, the empirical plug-in estimator
	of the WCPEIR is defined as
\begin{equation}\label{eq2.12*}
			\widehat{IR}_{wp}(X_1,X_2)=\frac{\displaystyle
				\sum_{i=1}^{n-1}
				W_{i}
				F_{1n}(y_{(i)})
				F_{2n}(y_{(i)})
				}
			{
				\displaystyle
				\sum_{i=1}^{n-1}
				W_{i}
				F_{1n}^{2}(y_{(i)})},
				\end{equation}
		where $W_i=\int_{y_{(i)}}^{y_{(i+1)}}w(y)dy.$
		Next, the consecutive theorems provide sufficient conditions such that the estimator in \eqref{eq2.12*} is consistent. 
	\begin{thm}
		The estimator $\widehat{IR}_{wp}(X_1,X_2)\xrightarrow{p}IR_{wp}(X_1,X_2),$ when $\int_{0}^{\infty}w(y)dy<\infty.$
	\end{thm}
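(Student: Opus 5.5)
The strategy is to interpret the numerator and denominator of \eqref{eq2.12*} as integrals of step functions. Then I would show that each converges in probability to the corresponding population integral $\int_0^\infty w(y)F_1(y)F_2(y)\,dy$ and $\int_0^\infty w(y)F_1^2(y)\,dy$, respectively. The continuous mapping theorem for the ratio finishes the argument, since $J_{wp}(X_1)\neq 0$ and the common factor $-\tfrac12$ cancels.

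\textbf{Rewriting the sums.} On each interval $[y_{(i)},y_{(i+1)})$ the functions $F_{1n}$ and $F_{2n}$ need not be constant, because the $Z$-jumps fall inside it. I would therefore write
\[
\sum_{i=1}^{n-1}W_iF_{1n}(y_{(i)})F_{2n}(y_{(i)})=\int_{y_{(1)}}^{y_{(n)}}w(y)F_{1n}(y)F_{2n}(y)\,dy+R_n .
\]
Since $F_{1n}$ is constant on these intervals, the remainder satisfies
\[
|R_n|\le \int_{y_{(1)}}^{y_{(n)}} w(y)\,\bigl|F_{2n}(y)-F_{2n}(y_{(i(y))})\bigr|\,dy ,
\]
where $i(y)$ denotes the index of the interval containing $y$. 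For the denominator, $F_{1n}^2$ is exactly constant on each interval, so there the sum equals the integral with no remainder.

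\textbf{Empirical integrals converge.} By Glivenko--Cantelli, $\sup_y|F_{kn}(y)-F_k(y)|\to0$ almost surely for $k=1,2$. Using $|ab-cd|\le|a-c|+|b-d|$ for numbers in $[0,1]$, together with $\int_0^\infty w<\infty$, I obtain
\[
\Bigl|\int_0^\infty w\,(F_{1n}F_{2n}-F_1F_2)\,dy\Bigr|\le \bigl(\|F_{1n}-F_1\|_\infty+\|F_{2n}-F_2\|_\infty\bigr)\int_0^\infty w(y)\,dy\to0 .
\]
The truncation to $[y_{(1)},y_{(n)}]$ costs at most $\int_0^{y_{(1)}}w+\int_{y_{(n)}}^\infty w$. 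The first piece is harmless because $F_{1n}$ is identically zero below $y_{(1)}$. The second piece is the integral of $w$ over $(y_{(n)},\infty)$ with integrand at most $1$. If the support of $F_1$ is unbounded, $y_{(n)}\to\infty$ almost surely, so this tail tends to zero by integrability of $w$. If the support is bounded with upper endpoint $b$, then $y_{(n)}\to b$, and the missing piece converges to $\int_b^\infty wF_2\,dy$. In that case I would have to note that the population ratio is then also effectively over $[0,b]$; otherwise the estimator is consistent for the truncated quantity. I would state this caveat, or assume unbounded support.

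\textbf{Main obstacle: the remainder $R_n$.} I expect $R_n$ to be the hardest step. Using continuity of $F_2$, I would bound
\[
|F_{2n}(y)-F_{2n}(y_{(i)})|\le 2\|F_{2n}-F_2\|_\infty+\bigl|F_2(y)-F_2(y_{(i)})\bigr| .
\]
The first term contributes at most $2\|F_{2n}-F_2\|_\infty\int_0^\infty w\to 0$. For the second term I would use the modulus of continuity of $F_2$ together with the fact that the spacing $\max_i|F_1(y_{(i+1)})-F_1(y_{(i)})|\to0$ in probability. Translating spacings in $F_1$-scale into $F_2$-scale requires some care on compact sets, with the tails handled by integrability of $w$ as in the previous step. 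Dominated convergence, with dominating function $w$, then gives $R_n\to0$ in probability.

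Combining the three steps, the numerator converges in probability to $-2I_{wp}(X_1,X_2)$ and the denominator to $-2J_{wp}(X_1)$, which is nonzero. The continuous mapping theorem then yields $\widehat{IR}_{wp}(X_1,X_2)\xrightarrow{p}IR_{wp}(X_1,X_2)$.
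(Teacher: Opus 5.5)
Your Step 2 is the paper's entire proof. The paper sets $\widehat I_{wp}=-\frac12\int_0^\infty wF_{1n}F_{2n}\,dy$ and $\widehat J_{wp}=-\frac12\int_0^\infty wF_{1n}^2\,dy$, applies Glivenko--Cantelli with dominated convergence (dominating function $w$), and finishes with Slutsky. It thereby tacitly identifies \eqref{eq2.12*} with the ratio of full integrals and never deals with the Riemann-sum form. Your extra bookkeeping is therefore well motivated. Your bounded-support caveat is a genuine defect of the statement, not a technicality. For $F_1(y)=y$, $F_2(y)=y^2$ on $(0,1)$ and $w(y)=e^{-y}$, \eqref{eq2.12*} converges to $(6-16e^{-1})/(2-5e^{-1})\approx0.709$, whereas $IR_{wp}(X_1,X_2)=(6-15e^{-1})/(2-4e^{-1})\approx0.912$. (Your aside that the population ratio is then ``effectively over $[0,b]$'' holds only if $w=0$ on $(b,\infty)$.)

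The step that actually fails is the remainder $R_n$. A vanishing maximal $F_1$-spacing gives small $F_2$-increments only where $F_1$ is strictly increasing. If the support of $X_1$ has an internal gap on which $F_2$ grows and $w>0$, no care on compact sets helps. This is compatible with unbounded support, so your fallback assumption is not enough. Take $X_1$ with density proportional to $e^{-y}$ on $[0,1]\cup[2,\infty)$, $X_2\sim\mathrm{Exp}(1)$, and $w(y)=e^{-y}$. Almost surely the $Y$-spacing straddling the gap has endpoints tending to $1$ and $2$. On that spacing the sum uses $F_{2n}(y_{(i)})\approx F_2(1)$ in place of $F_{2n}(y)$, so $R_n\to-F_1(1)\int_1^2e^{-y}\bigl(F_2(y)-F_2(1)\bigr)\,dy<0$. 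The denominator is unaffected, so \eqref{eq2.12*} is inconsistent.

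The hypothesis that makes your sketch rigorous is that the support of $X_1$ is an interval $[a,\infty)$. Then $F_1$ maps $[a,M]$ homeomorphically onto $[0,F_1(M)]$, and uniform continuity of its inverse turns the vanishing maximal $F_1$-spacing into vanishing $y$-spacings on $[a,M]$. Uniform continuity of $F_2$ then controls the integrand there. The spacings reaching past $M$ contribute asymptotically at most $\int_M^\infty w$, and $y_{(n)}\to\infty$ removes the truncation term. State that assumption explicitly. Alternatively, define the estimator as the integral plug-in, as the paper's proof effectively does; then your Step 2 plus the continuous mapping theorem is already a complete proof.
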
	
\begin{proof}
	By the Glivenko--Cantelli theorem,
	\begin{equation}
	\sup_y|F_{1n}(y)-F_1(y)|
	\xrightarrow{a.s.}0~ \text{and}~ \sup_y|F_{2n}(y)-F(y)|
	\xrightarrow{a.s.}0.
	\end{equation}
	Since $0\le F_{1n}(y),~ F_{2n}(y)\le 1 \implies 0\le w(y) F_{1n}(y) F_{2n}(y)\le w(y) ~\text{and}~ \int_{0}^{\infty}w(y)dy<\infty. $
	Thus, by dominated convergence theorem
	\begin{equation}
	\widehat{I}_{wp}(X_1, X_2)
	=-\frac12\int_0^\infty
	w(y)F_{1n}(y)F_{2n}(y)\,dy\xrightarrow{a.s.}-\frac12\int_0^\infty
	w(y)F_1(y)F_2(y)\,dy= {I}_{wp}(X_1, X_2).
	\end{equation}
	Similarly,
	\begin{equation}
	\widehat{J}_{wp}(X_1)
	=-\frac12 \int_0^\infty w(y)F_{1n}^2(y)\,dy \xrightarrow{a.s.}-\frac12 \int_0^\infty w(y)F_1^2(y)\,dy=J_{wp}(X_1).
	\end{equation}
	Now, Slutsky's theorem gives
	\begin{equation}
	\frac{\widehat{I}_{wp}(X_1, X_2)}
	{\widehat{J}_{wp}(X_1)}
	\xrightarrow{p}
	\frac{I_{wp}(X_1,X_2)}
	{J_{wp}(X_1)},
	\end{equation}
	provided $J_{wp}(X_1)\ne0.$ This completes the proof.
	\end{proof}	
	\begin{thm}
		Let $X_1$ and $X_2$ be two nonnegative RVs such that
		$X_1,X_2\in L_p$. Suppose that
		\[
	\int_0^1 w(y)\,dy<\infty~~\text{and}~~	\int_1^\infty \frac{w(y)}{y^p}\,dy<\infty.
		\]
		Then,
		\[
		\widehat{IR}_{wp}(X_1,X_2)
		\xrightarrow{p}
		IR_{wp}(X_1,X_2).
		\]
	\end{thm}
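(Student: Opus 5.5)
The plan is to reduce the claim to almost sure convergence of the numerator and denominator integrals, exactly as in the preceding theorem. The difference is that $w$ may now fail to be integrable on $[1,\infty)$, so the constant dominating function used there is no longer available. We work with
\[
\widehat{I}_{wp}(X_1,X_2)=-\tfrac12\int_0^\infty w(y)F_{1n}(y)F_{2n}(y)\,dy,\qquad \widehat{J}_{wp}(X_1)=-\tfrac12\int_0^\infty w(y)F_{1n}^2(y)\,dy,
\]
whose ratio is the estimator in (\ref{eq2.12*}), up to the Riemann-sum discretisation treated as in the previous theorem. By Glivenko--Cantelli, $F_{jn}(y)\to F_j(y)$ uniformly almost surely for $j=1,2$. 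Hence the integrands converge pointwise almost surely, and the whole task is to justify interchanging limit and integral.

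We split each integral at $y=1$.

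\emph{Integral over $[0,1]$.} Use the bound $0\le w(y)F_{1n}(y)F_{2n}(y)\le w(y)$ together with $\int_0^1 w<\infty$. Dominated convergence then applies exactly as in the previous theorem.

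\emph{Integral over $[1,\infty)$.} The bound $F_{jn}\le 1$ is useless here, so we replace it by a tail bound. Write
\[
F_{1n}(y)F_{2n}(y)=F_1F_2+\big(F_{1n}F_{2n}-F_1F_2\big),
\]
and note that $F_1F_2\le 1$. Instead of pursuing this decomposition, it is cleaner to use the elementary estimate
\[
|F_{1n}F_{2n}-F_1F_2|\le |F_{1n}-F_1|+|F_{2n}-F_2|.
\]
One then needs $\int_1^\infty w(y)\,|F_{jn}(y)-F_j(y)|\,dy\to0$ almost surely.

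This tail step is where the main obstacle lies. Markov's inequality gives $1-F_j(y)\le E X_j^p/y^p$ for $X_j\in L_p$, and likewise $1-F_{jn}(y)\le \frac{1}{n}\sum_i Y_i^p/y^p$. For the latter, the strong law of large numbers gives $\frac1n\sum_i Y_i^p\to EX_1^p$ almost surely, so this empirical moment is eventually bounded by $2EX_1^p$; the same argument applies to the $Z_i$.

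Hence $|F_{jn}(y)-F_j(y)|=|(1-F_j)-(1-F_{jn})|\le C\,y^{-p}$ for all $y\ge1$ and all large $n$, almost surely. The function $w(y)y^{-p}$ is integrable on $[1,\infty)$ by hypothesis, so dominated convergence (applied pathwise, with the pointwise convergence supplied by Glivenko--Cantelli) yields
\[
\int_1^\infty w\,|F_{jn}-F_j|\,dy\to0 \quad\text{almost surely.}
\]

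This gives $\widehat I_{wp}(X_1,X_2)\to I_{wp}(X_1,X_2)$ almost surely, with the limit finite. One subtlety should be flagged: finiteness of $\int_1^\infty wF_1F_2$ itself is part of the standing assumption $\int_0^\infty wF_1F_2<\infty$ from the definition, and it is needed because the tail bound above controls only the difference. The denominator is handled identically, using $|F_{1n}^2-F_1^2|\le 2|F_{1n}-F_1|$, so $\widehat J_{wp}(X_1)\to J_{wp}(X_1)$ almost surely. Since $J_{wp}(X_1)\neq0$, the continuous mapping theorem (Slutsky) gives almost sure, and hence in-probability, convergence of the ratio to $IR_{wp}(X_1,X_2)$.
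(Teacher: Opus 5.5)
Your proposal is essentially the paper's own proof. Both split at $y=1$ and use dominated convergence on $[0,1]$. On the tail, both bound $|F_{1n}F_{2n}-F_1F_2|\le|F_{1n}-F_1|+|F_{2n}-F_2|$, apply Markov's inequality to the true and the empirical survival functions (the strong law keeping the empirical $p$-th moments bounded), and use dominated convergence on $[1,\infty)$ against $C\,w(y)y^{-p}$, with Slutsky for the ratio.

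Your finiteness caveat is a point the paper skips, and it goes further than you say. Since $F_1(y)F_2(y)\to1$, finiteness of $\int_0^\infty wF_1F_2$ already forces $\int_0^\infty w<\infty$, so the statement adds nothing beyond the preceding theorem. Also, like the paper, you never link the Riemann sum (\ref{eq2.12*}) to the integrals. The previous theorem does not handle that step either, and the link can fail, e.g.\ when $X_1$ has bounded support, because the sum discards everything beyond $Y_{(n)}$.
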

	
	\begin{proof}
		Let $\{Y_i\}_{i=1}^n$ and $\{Z_i\}_{i=1}^n$ be two independent
		random samples from the distributions of $X_1$ and $X_2$,
		respectively. The CDFs of $X_1$ and $X_2$ are denoted by $F_{1}(\cdot)$ and $F_{2}(\cdot)$,  their EDFs by
		$F_{1n}(\cdot)$ and $F_{2n}(\cdot)$, respectively.
		By the Glivenko--Cantelli theorem,
		\begin{equation}
			\sup_{y\geq0}|F_{1n}(y)-F_1(y)|
			\xrightarrow{\mathrm{a.s.}}0
			~~\text{and}~~
			\sup_{y\geq0}|F_{2n}(y)-F_2(y)|
			\xrightarrow{\mathrm{a.s.}}0.
		\end{equation}
		Consider
		\begin{eqnarray}
			\widehat{I}_{wp}(X_1,X_2)
			&=&-\frac12\int_0^\infty
			w(y)F_{1n}(y)F_{2n}(y)\,dy \nonumber\\
			&=&-\frac12\int_0^1
			w(y)F_{1n}(y)F_{2n}(y)\,dy
			-\frac12\int_1^\infty
			w(y)F_{1n}(y)F_{2n}(y)\,dy \nonumber\\
			&=&B_{1n}+B_{2n}.
		\end{eqnarray}
		For the first integral, since
		$0\leq F_{1n}(y),F_{2n}(y)\leq1$, we have
		\[
		|w(y)F_{1n}(y)F_{2n}(y)|\leq w(y),
		\qquad 0\leq y\leq1.
		\]
		Assuming $w(\cdot)$ is integrable on $[0,1]$, the dominated convergence
		theorem, together with the Glivenko--Cantelli theorem, gives
		\begin{equation}
			B_{1n}\xrightarrow{\mathrm{a.s.}}
			-\frac12\int_0^1w(y)F_1(y)F_2(y)\,dy.
		\end{equation}
		It remains to consider the integral over $(1,\infty)$.
		Let
		\[
		\overline F_{jn}(y)=1-F_{jn}(y),
		\qquad
		\overline F_j(y)=1-F_j(y),
		\qquad j=1,2.
		\]
		Using $F_{jn}=1-\overline F_{jn}$, we obtain
		\begin{eqnarray}
			|F_{1n}(y)F_{2n}(y)-F_1(y)F_2(y)|
			&\leq&
			|F_{1n}(y)-F_1(y)|
			+|F_{2n}(y)-F_2(y)| \nonumber\\
			&=&
			|\overline F_{1n}(y)-\overline F_1(y)|
			+|\overline F_{2n}(y)-\overline F_2(y)|.
		\end{eqnarray}
		For $y>0$, Markov's inequality gives
		\begin{equation}
			\overline F_j(y)
			=P(X_j>y)
			\leq \frac{E(X_j^p)}{y^p},
			\qquad j=1,2.
		\end{equation}
		Similarly, for the empirical survival function,
		\begin{equation}
			\overline F_{jn}(y)
			=\frac1n\sum_{i=1}^n
			\mathbf{1}_{\{X_{ji}>y\}}
			\leq
			\frac{1}{ny^p}\sum_{i=1}^nX_{ji}^p.
		\end{equation}
		Since $X_j\in L_p$, the strong law of large numbers yields
		\begin{equation}
			\frac1n\sum_{i=1}^nX_{ji}^p
			\xrightarrow{\mathrm{a.s.}}
			E(X_j^p),
			\qquad j=1,2.
		\end{equation}
		Consequently, with probability one, there exists a finite random
		constant $M_j$ such that
		\[
		\frac1n\sum_{i=1}^nX_{ji}^p\leq M_j
		\]
		for all sufficiently large $n$. Hence, almost surely,
		\begin{equation}
			\overline F_{jn}(y)
			\leq\frac{M_j}{y^p},
			\qquad y\geq1,
		\end{equation}
		for all sufficiently large $n$. Therefore,
		\begin{equation}
			|F_{1n}(y)F_{2n}(y)-F_1(y)F_2(y)|
			\leq
			\frac{M_1+E(X_1^p)+M_2+E(X_2^p)}
			{y^p}.
		\end{equation}
		Thus,
		\begin{equation}
			w(y)|F_{1n}(y)F_{2n}(y)-F_1(y)F_2(y)|
			\leq
			C\frac{w(y)}{y^p},
			\qquad y\geq1,
		\end{equation}
		where $C<\infty$ almost surely. By assumption,
		\[
		\int_1^\infty\frac{w(y)}{y^p}\,dy<\infty.
		\]
		Hence, by the dominated convergence theorem,
		\begin{equation}
			B_{2n}\xrightarrow{\mathrm{a.s.}}
			-\frac12\int_1^\infty
			w(y)F_1(y)F_2(y)\,dy.
		\end{equation}
		Combining the convergence of $B_{1n}$ and $B_{2n}$, we obtain
		\begin{eqnarray}
			\widehat I_{wp}(X_1,X_2)
			&\xrightarrow{\mathrm{a.s.}}&
			-\frac12\int_0^\infty
			w(y)F_1(y)F_2(y)\,dy \nonumber\\
			&=&I_{wp}(X_1,X_2).
		\end{eqnarray}
		Similarly,
		\begin{equation}
			\widehat{J}_{wp}(X_1)
			=-\frac12 \int_0^\infty w(y)F_{1n}^2(y)\,dy \xrightarrow{a.s.}-\frac12 \int_0^\infty w(y)F_1^2(y)\,dy=J_{wP}(X_1).
		\end{equation}
		Slutsky's theorem gives
		\begin{equation}
			\frac{\widehat{I}_{wp}(X_1, X_2)}
			{\widehat{J}_{wp}(X_1)}
			\xrightarrow{p}
			\frac{I_{wp}(X_1,X_2)}
			{J_{wp}(X_1)},
		\end{equation}
	provided $J_{wp}(X_1)\ne0.$ This completes the proof.
	\end{proof}

	\subsection{Simulation study for WCPEIR}	
	To assess the finite-sample behavior of the proposed empirical estimator $\widehat{IR}_{wp}(X_1,X_2)$ of the WCPEIR, an extensive Monte Carlo simulation study has been carried out. $R$ software is used in this purpose. The primary objectives of the simulation study are twofold:
\begin{enumerate}
		\item To validate the asymptotic properties of the empirical estimator;
		\item To assess the absolute bias (AB) and the mean squared error (MSE) behavior for different sample sizes;
	\end{enumerate}
	Throughout the simulation process, the weight function is chosen as	$w(y)=e^{-y}.$ Based on $w(y)=e^{-y},$ the estimator is
	\begin{equation}
		\widehat{IR}_{wp}(X_1,X_2)=\frac{\displaystyle
			\sum_{i=1}^{n-1}
			\left(e^{-\lambda y_{(i+1)}}-e^{-\lambda y_{(i)}}\right)
			F_{1n}(y_{(i)})
			F_{2n}(y_{(i)})
		}
		{
			\displaystyle
			\sum_{i=1}^{n-1}
				\left(e^{-\lambda y_{(i+1)}}-e^{-\lambda y_{(i)}}\right)
			F_{1n}^{2}(y_{(i)})}.
		\label{eq2.12}
	\end{equation}
	Recall that a commonly used bounded  decreasing weight function assigns greater importance to observations occurring at smaller lifetimes. To evaluate its accuracy, three exponential models are considered;
	\[
	X_1\sim \text{Exp}(1),\quad
	X_2\sim \text{Exp}(1.3),\quad
	X_3\sim \text{Exp}(1.5),
	\]
	where $X_1$ is regarded as the reference distribution. Consequently, the following three situations are examined:
	\[
	IR_{wp}(X_1,X_1),\quad
	IR_{wp}(X_1,X_2),\quad
	IR_{wp}(X_1,X_3).
	\]
	The corresponding theoretical values of the WCPEIR are obtained as
	\[
	IR_{wp}(X_1,X_1)=1.0000,\quad
	IR_{wp}(X_1,X_2)=1.1047,\quad
	IR_{wp}(X_1,X_3)=1.1571.
	\]
	Random samples of sizes $n=10,\;20,\;50,\;150,\;\text{and}\;250$
	are generated independently from each distribution. For every sample size, the experiment is repeated $10000$ times in order to obtain reliable Monte Carlo approximations. For each replication, we computed
	\begin{itemize}
		\item $\widehat{IR}_{wp}(X_1,X_1)$: WCPEIR between $X_1$ and itself;
		\item $\widehat{IR}_{wp}(X_1,X_2)$: WCPEIR between $X_1$ and $X_2$;
		\item $\widehat{IR}_{wp}(X_1,X_3)$: WCPEIR between $X_1$ and $X_3$.
	\end{itemize}
	Tables \ref{tab:bias} and \ref{tab:mse} present the simulation results for each sample size $n = 10, 20, 50, 150, 250$. The results include  AB and MSE  for the three comparison cases: $X_1X_1~ (X_1 ~\text{vs} ~X_1),~ X_1X_2 ~(X_1~ \text{vs} ~X_2),$  and $ ~X_1X_3  ~(X_1 ~\text{vs}~ X_3).$ Few conclusions are pointed out, which are provided below.
	\begin{table}[htbp]
		\centering
		\caption{AB comparison across all three cases.}
		\begin{tabular}{lrrrrrr}
			\toprule
			\textbf{Case}&\textbf{estimators} & \textbf{n=10} & \textbf{n=20} & \textbf{n=50} & \textbf{n=150} & \textbf{n=250} \\
			\midrule
			$X_1X_1$& $\widehat{IR}_{wp}(X_1,X_1)$ &0.000000 & 0.000000 & 0.000000 & 0.000000 & 0.000000 \\
			$X_1X_2$ & $\widehat{IR}_{wp}(X_1,X_2)$ &0.077014 & 0.063128 & 0.055876 & 0.054240 & 0.053571 \\
			$X_1X_3$& $\widehat{IR}_{wp}(X_1,X_3)$ &0.142569 & 0.116832 & 0.103916 & 0.096627 & 0.095211 \\
			\bottomrule
		\end{tabular}
		\label{tab:bias}
	\end{table}
	\begin{table}[htbp]
		\centering
		\caption{MSE values by sample size for all three cases.}
		\begin{tabular}{lrrrrrr}
			\toprule
			\textbf{Case}&\textbf{estimators} & \textbf{n=10} & \textbf{n=20} & \textbf{n=50} & \textbf{n=150} & \textbf{n=250} \\
			\midrule
			$X_1X_1$& $\widehat{IR}_{wp}(X_1,X_1)$ &0.000000 & 0.000000 & 0.000000 & 0.000000 & 0.000000 \\
			$X_1X_2$ & $\widehat{IR}_{wp}(X_1,X_2)$ &0.061193 & 0.026479 & 0.010988 & 0.005498 & 0.004394 \\
			$X_1X_3$& $\widehat{IR}_{wp}(X_1,X_3)$ &0.072388 & 0.034153 & 0.018165 & 0.011706 & 0.010463 \\
			\bottomrule
		\end{tabular}
		\label{tab:mse}
	\end{table}
	\begin{enumerate}
		\item $X_1X_1$ Case: The estimator for $\widehat{IR}_{wp}(X_1,X_1)$ performs exceptionally well, with zero bias and zero MSE across all sample sizes. This is expected since $X_1$ and $X_1$ are identical distributions.
		\item $X_1X_2$ Case: The estimator shows moderate AB and MSE.
		\item $X_1X_3$ Case: The estimator for $\widehat{IR}_{wp}(X_1,X_3)$ shows slightly higher AB and MSE compared to $X_1X_2$, which is expected since $X_1$ and $X_3$ are more different than $X_1$ and $X_2$.
		\item Notice that as sample size increases both AB and MSE of all the estimators decrease.
		\end{enumerate}
		The comparison across the three cases reveals a clear pattern:
		\begin{enumerate}
			\item \textbf{$X_1X_1$ (No discrepancy):} The estimator has \textbf{zero AB and zero MSE}. This is because when distributions are identical, the WCPEIR estimator equals the WCPE of the reference distribution, and there is no estimation error.
			\item \textbf{$X_1X_2$ (Moderate discrepancy):} The estimator has \textbf{moderate AB and MSE}. The discrepancy between $X_1$ and $X_2$ is relatively small, and the estimator performs well with AB decreasing to near zero as $n$ increases.
			\item \textbf{$X_1X_3$ (Large discrepancy):} The estimator has \textbf{larger initial AB} compared to $X_1X_2$, but the MSE is comparable at larger sample sizes. This indicates that while the discrepancy is larger, the estimator still converges effectively.
		\end{enumerate}
		These findings confirm that the proposed WCPEIR estimator is reliable and can be recommended for practical applications in discriminating between probability distributions.
	\section{Weighted cumulative past extropy-divergence}\label{sec3}
	In many applications of reliability analysis, survival studies and information theory, it is often necessary to quantify the difference between two past lifetime distributions. While the WCPEI measure evaluates the cross-information between two distributions, a direct measure of divergence could be more appropriate for assessing the discrepancy between them. In this, we define the WCPED. This measure compares two distribution functions by incorporating a nonnegative weight function, there by allowing greater emphasis to be assigned to specific regions of the support. The following definition formalizes this divergence measure.
	\begin{definition}
		Consider two nonnegative continuous RVs $X_1$ and $X_2$, with CDFs $F_1(\cdot)$ and $F_2(\cdot)$, respectively. Let $w(\cdot)$ be a nonnegative integrable weight function. Then, the WCPED between $X_1$ and $X_2$ is defined as
		\begin{equation}
			D_{wp}(X_1,X_2)=D_{wp}(F_1|F_2)
			=
			\int_{0}^{\infty}
			w(y)\bigl(F_1(y)-F_2(y)\bigr)F_1(y)\,dy.
			\label{eq:3.1}
		\end{equation}
	\end{definition}
	Using the definition of the WCPE  and WCPEI measures, (\ref{eq:3.1}) can be expressed as
	\begin{equation}
		D_{wp}(X_1,X_2)=I_{wp}(X_1,X_2)-J_{wp}(X_1),
	\end{equation}
	where $I_{wp}(X_1,X_2)$ is defined as in (\ref{eq2.1}) and $J_{wp}(X_1)$ is defined as in (\ref{eq2.2}).
	The WCPED  becomes zero whenever the two distributions are identical and its magnitude increases as the difference between the distributions becomes larger. Since the measure is constructed using CDFs instead of PDFs, it is generally more stable and less sensitive to local fluctuations in the underlying data.
	The following theorem establishes sufficient condition, under which the WCPED is finite.
	
	\begin{thm}
		Consider two nonnegative continuous RVs $X_1$ and $X_2$, satisfying $\int_{0}^{\infty}w(y)\,dy<\infty.$
		Then, the WCPED exists and is finite, that is
		\[
		|D_{wp}(X_1,X_2)|<\infty.
		\]
	\end{thm}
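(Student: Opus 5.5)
The plan is to bound the integrand pointwise by an integrable function and then integrate, exactly as in the proof of the finiteness result for the WCPEI given earlier in this section. Since $F_1(\cdot)$ and $F_2(\cdot)$ are CDFs, for every $y\ge 0$ we have $0\le F_1(y),F_2(y)\le 1$. Hence $|F_1(y)-F_2(y)|\le 1$, and therefore
\[
\bigl|w(y)\bigl(F_1(y)-F_2(y)\bigr)F_1(y)\bigr|\le w(y)\,|F_1(y)-F_2(y)|\,F_1(y)\le w(y),\qquad y\ge0,
\]
using that $w(\cdot)$ is nonnegative.

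Next, I would integrate this bound over $[0,\infty)$. The integrand in (\ref{eq:3.1}) is measurable, being a product of measurable functions, and it is dominated by $w(y)$. By the hypothesis $\int_0^\infty w(y)\,dy<\infty$ it is absolutely integrable, so the integral defining $D_{wp}(X_1,X_2)$ exists as a Lebesgue integral. This gives
\[
|D_{wp}(X_1,X_2)|\le\int_0^\infty w(y)\,|F_1(y)-F_2(y)|\,F_1(y)\,dy\le\int_0^\infty w(y)\,dy<\infty .
\]
As an alternative route, I could use the decomposition $D_{wp}(X_1,X_2)=I_{wp}(X_1,X_2)-J_{wp}(X_1)$. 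By the earlier theorem, both $I_{wp}(X_1,X_2)$ and $J_{wp}(X_1)=I_{wp}(X_1,X_1)$ lie in $(-\infty,0]$ under the same integrability condition on $w(\cdot)$. Their difference is therefore finite, and in fact $|D_{wp}(X_1,X_2)|\le \frac12\int_0^\infty w(y)\,dy$, since each term is bounded in absolute value by $\frac12\int_0^\infty w$ and both terms have the same sign.

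There is no real obstacle here. The only point needing care is that the integrand may change sign, since $F_1-F_2$ need not have constant sign. For this reason I would bound the absolute value of the integrand, rather than the integrand itself, to guarantee that the integral is well defined and not of the form $\infty-\infty$. The domination by $w$ settles this.
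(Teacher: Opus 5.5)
Your main argument is exactly the paper's proof: bound the integrand by $w(y)$ using $0\le F_1,F_2\le 1$ and integrate. Your remark that absolute integrability excludes an $\infty-\infty$ form is a welcome bit of extra care.

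One caveat concerns your alternative route. The identity $D_{wp}=I_{wp}-J_{wp}$ that you borrow from the paper is off by a factor of $2$ from the literal definition (\ref{eq:3.1}), which actually gives $I_{wp}(X_1,X_2)-J_{wp}(X_1)=\frac12 D_{wp}(X_1,X_2)$. So that route only yields $|D_{wp}(X_1,X_2)|\le\int_0^\infty w(y)\,dy$, and your sharper constant $\frac12$ can fail. For example, take $w$ the indicator of $[1,2]$, $X_1$ uniform on $(0,1)$ and $X_2$ uniform on $(2,3)$; then $D_{wp}=1=\int_0^\infty w(y)\,dy$. Finiteness is unaffected.
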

	\begin{proof}
	Since $F_1(\cdot)$ and $F_2(\cdot)$ are distribution functions, thus
	$0\le F_1(y)\le1,\quad 0\le F_2(y)\le1,\quad y\ge0.$
	Hence, $|F_1(y)-F_2(y)|\le1$ and it further implies $|F_1(y)(F_1(y)-F_2(y))|\le F_1(y)|F_1(y)-F_2(y)|\le1.$
	Using the triangle inequality, we obtain
	\[
		\begin{aligned}
			|D_{wp}(X_1,X_2)|
			&=
			\left|
			\int_{0}^{\infty}
			w(y)\bigl[F_1(y)-F_2(y)\bigr]F_1(y)\,dy
			\right|  \\
			&\le
			\int_{0}^{\infty}
			w(y)
			\left|
			F_1(y)[F_1(y)-F_2(y)]
			\right|
			dy \\
			&\le
			\int_{0}^{\infty}
			w(y)\,dy.
		\end{aligned}
		\]
		From the assumption $\int_{0}^{\infty}w(y)\,dy<\infty,$
		it follows that $|D_{wp}(X_1,X_2)|<\infty.$
	\end{proof}
	\begin{thm}
		Suppose $X_1$ and $X_2$ have CDFs $F_1(\cdot)$ and $F_2(\cdot)$, respectively. If
		 $F_1(y)\le(\ge) F_2(y) , ~\text{for all}~~ y\ge0,$ then
		$D_{wp}(X_1,X_2)\le(\ge)0.$
    \end{thm}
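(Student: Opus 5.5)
The argument is a direct sign analysis of the integrand in (\ref{eq:3.1}). We work with $D_{wp}(X_1,X_2)=\int_{0}^{\infty}w(y)\bigl(F_1(y)-F_2(y)\bigr)F_1(y)\,dy$ and assume, as in the preceding theorem, that the integral is finite (for instance $\int_0^\infty w(y)\,dy<\infty$). Then the sign of the integral is controlled pointwise by the sign of the integrand.

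Take first the case $F_1(y)\le F_2(y)$ for all $y\ge0$. We examine the three factors of the integrand separately:
\begin{itemize}
\item $w(y)\ge 0$ by the assumption on the weight function;
\item $F_1(y)\ge0$ because $F_1$ is a CDF;
\item $F_1(y)-F_2(y)\le0$ by hypothesis.
\end{itemize}
Hence $w(y)\bigl(F_1(y)-F_2(y)\bigr)F_1(y)\le 0$ for every $y\ge0$. Integrating a nonpositive measurable function gives a nonpositive value, so $D_{wp}(X_1,X_2)\le0$ by monotonicity of the integral. The reversed case $F_1(y)\ge F_2(y)$ is identical except that the middle factor is nonnegative, so the integrand is nonnegative and $D_{wp}(X_1,X_2)\ge0$.

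There is no real obstacle here. The only point needing care is that the integral is well defined, so that monotonicity of the integral can be applied rather than an $\infty-\infty$ form appearing. This is guaranteed under the integrability condition of the preceding theorem. Even without it, in each case the integrand has a constant sign, so the integral exists in the extended sense with the stated sign.

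Finally, the hypothesis in the statement is exactly the usual stochastic order, $X_1\ge_{st}X_2$ (respectively $X_1\le_{st}X_2$). The conclusion can therefore be restated in those terms, as the paper remarked for the WCPEIR.
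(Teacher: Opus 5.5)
Your proof is correct and is essentially the paper's argument. The paper multiplies the pointwise inequality $F_1(y)\le(\ge)F_2(y)$ by $w(y)F_1(y)\ge0$, integrates, and rearranges to get $\int_0^\infty w(y)F_1(y)\bigl(F_1(y)-F_2(y)\bigr)\,dy\le(\ge)0$; you instead sign the single integrand directly, which is the same idea and slightly cleaner because it never subtracts two separately integrated terms that might be infinite.
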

    \begin{proof}
    	If  $F_1(y)\le(\ge) F_2(y),$ then $w(y)F_1(y)F_1(y)\le(\ge)w(y) F_1(y) F_2(y).$ Integrating both sides, we get
    	\begin{eqnarray}
    		\int_{0}^{\infty}w(y)F_1^2(y)dy\le(\ge) \int_{0}^{\infty}w(y)F_1(y) F_2(y)dy\nonumber\\
    		\Rightarrow \int_{0}^{\infty}w(y)F_1(y)(F_{1}(y)-F_{2}(y))dy \le(\ge) 0,
    	\end{eqnarray}
    	which follows the result.
    \end{proof}
        \begin{thm}
    		Let $w_1(\cdot)$ and $w_2(\cdot)$ be two nonnegative weight functions. Define $w(y)=aw_1(y)+bw_2(y),$ where $a,b\ge0$.
    		Then, the WCPED is additive with respect to the weight function, that is
    		$D_{wp}(X_1,X_2)=aD_{w_1p}(X_1,X_2)+bD_{w_2p}(X_1,X_2).$
    	\end{thm}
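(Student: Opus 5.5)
The claim follows from linearity of the Lebesgue integral in the integrand's weight factor. I would start from the definition (\ref{eq:3.1}) with $w(y)=aw_1(y)+bw_2(y)$ substituted:
\[
D_{wp}(X_1,X_2)=\int_{0}^{\infty}\bigl(aw_1(y)+bw_2(y)\bigr)\bigl(F_1(y)-F_2(y)\bigr)F_1(y)\,dy .
\]
Expanding the integrand pointwise gives $a\,w_1(y)\bigl(F_1(y)-F_2(y)\bigr)F_1(y)+b\,w_2(y)\bigl(F_1(y)-F_2(y)\bigr)F_1(y)$. The integral then splits into $a\int_0^\infty w_1(F_1-F_2)F_1\,dy+b\int_0^\infty w_2(F_1-F_2)F_1\,dy$, which is $aD_{w_1p}(X_1,X_2)+bD_{w_2p}(X_1,X_2)$ by definition.

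The only point needing care is that the split of the integral is legitimate. The integrand $(F_1-F_2)F_1$ can change sign, so we need both $D_{w_1p}$ and $D_{w_2p}$ to be finite, i.e. absolutely convergent.

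\begin{itemize}
\item I would assume, as the definition of the WCPED does, that $w_1$ and $w_2$ are integrable on $[0,\infty)$.
\item By the preceding theorem on finiteness of the WCPED, $\bigl|w_i(F_1-F_2)F_1\bigr|\le w_i$ for $i=1,2$, so each integrand is in $L^1$.
\item Linearity of the Lebesgue integral on $L^1$ then justifies the split. Since $a,b\ge 0$, the combination $w=aw_1+bw_2$ is again a nonnegative integrable weight, so $D_{wp}$ is itself well defined.
\end{itemize}

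As an alternative route, one can use the representation $D_{wp}=I_{wp}(X_1,X_2)-J_{wp}(X_1)$. Both $I_{wp}$ and $J_{wp}$ have nonnegative integrands, so linearity in $w$ holds for each of them even without integrability, by monotone convergence. The result then follows whenever all the quantities involved are finite.

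This integrability check is the only potential obstacle; the rest is routine algebra.
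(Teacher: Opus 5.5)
Your proof is correct and is the linearity-of-the-integral argument the paper has in mind when it omits the proof as ``simple.'' Your check that each integrand is dominated by $w_i$, which makes the split legitimate even though $(F_1-F_2)F_1$ can change sign, is a justification the paper leaves implicit.
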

    	\begin{proof}
    		The proof is simple, hence omitted.
    	\end{proof}
    	\begin{thm}
    		Take three nonnegative  continuous RVs $X_1$, $X_2$, and $X_3$ with CDFs $F_1(\cdot)$, $F_2(\cdot)$, and $F_3(\cdot)$, respectively. Suppose $F_3(y)=\alpha F_1(y)+(1-\alpha)F_2(y),
    		\quad 0\le \alpha\le1.$ Then, $D_{wp}(X_1,X_3)=(1-\alpha)\,D_{wp}(X_1,X_2).$
    	\end{thm}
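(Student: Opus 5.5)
The plan is to substitute the mixture representation directly into the definition (\ref{eq:3.1}) of the WCPED and simplify the integrand pointwise. No limiting argument is needed; the identity should already hold at the level of the integrands.

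First I will record that $F_3$ is indeed a CDF of a nonnegative continuous RV, because a convex combination of such CDFs is again one. This makes $D_{wp}(X_1,X_3)$ well defined whenever $D_{wp}(X_1,X_2)$ is. Next, for every $y\ge 0$ I will compute
\[
F_1(y)-F_3(y)=F_1(y)-\alpha F_1(y)-(1-\alpha)F_2(y)=(1-\alpha)\bigl(F_1(y)-F_2(y)\bigr).
\]
Multiplying by $w(y)F_1(y)$ gives the pointwise identity
\[
w(y)\bigl(F_1(y)-F_3(y)\bigr)F_1(y)=(1-\alpha)\,w(y)\bigl(F_1(y)-F_2(y)\bigr)F_1(y).
\]

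Then I will integrate over $[0,\infty)$ and pull the constant $1-\alpha$ out of the integral, which yields $D_{wp}(X_1,X_3)=(1-\alpha)D_{wp}(X_1,X_2)$.

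The only point needing care is justifying linearity of the integral, since the integrand can change sign. I will handle this by assuming, as in the preceding finiteness theorem, that $\int_0^\infty w(y)\,dy<\infty$, or more generally that $\int_0^\infty w(y)F_1(y)F_2(y)\,dy$ and $\int_0^\infty w(y)F_1^2(y)\,dy$ are finite. Under that assumption the integrand is absolutely integrable, so the manipulation is legitimate.

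As an alternative route, I can use the decomposition $D_{wp}=I_{wp}-J_{wp}$. Bilinearity of $I_{wp}$ in its second argument gives $I_{wp}(X_1,X_3)=\alpha J_{wp}(X_1)+(1-\alpha)I_{wp}(X_1,X_2)$. Subtracting $J_{wp}(X_1)$ then yields the same result.
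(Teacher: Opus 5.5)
Your proof is correct and is exactly the straightforward argument the paper has in mind (it omits the proof as routine): the pointwise identity $F_1-F_3=(1-\alpha)(F_1-F_2)$, multiplied by $w(y)F_1(y)$ and integrated, is all that is needed. Your alternative route via $D_{wp}=I_{wp}-J_{wp}$ also works. With definition (\ref{eq:3.1}) as written (no factor $\tfrac12$) one actually has $D_{wp}=2(I_{wp}-J_{wp})$, but this does not affect your argument because the identity is homogeneous.
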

    	\begin{proof}
    		The proof is straightforward, and thus it is omitted.
    	\end{proof}
    	\begin{thm}
    		Consider $X_1$ and $X_2$ with CDFs $F_1(\cdot)$ and $F_2(\cdot)$, respectively. Suppose $X$ be another RV with $F(y)=\alpha F_2(y)+(1-\alpha)F_3(y),
    		\quad 0\le \alpha\le1.$	Then, the WCPED satisfies
    		$D_{wp}(X_1,X)=\alpha D_{wp}(X_1,X_2)+(1-\alpha)D_{wp}(X_1,X_3).$
    	\end{thm}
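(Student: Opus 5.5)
The plan is a direct computation from the definition (\ref{eq:3.1}). We read the statement with $X_3$ a third nonnegative continuous RV with CDF $F_3(\cdot)$. The key observation is that $D_{wp}(X_1,\cdot)$ is affine in the CDF of its second argument. Writing $D_{wp}(X_1,X)=J_{wp}(X_1)\cdot(-2)\cdot\tfrac{1}{2}\ldots$ is unnecessary; it is cleaner to use $D_{wp}(X_1,X)=I_{wp}(X_1,X)-J_{wp}(X_1)$, where $J_{wp}(X_1)$ does not depend on $X$ and $I_{wp}(X_1,X)$ is linear in $F(\cdot)$.

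First, we substitute $F(y)=\alpha F_2(y)+(1-\alpha)F_3(y)$ into the integrand. Using $F_1=\alpha F_1+(1-\alpha)F_1$, we obtain
\[
F_1(y)-F(y)=\alpha\bigl(F_1(y)-F_2(y)\bigr)+(1-\alpha)\bigl(F_1(y)-F_3(y)\bigr).
\]
Next, we multiply by $w(y)F_1(y)$ and integrate over $(0,\infty)$. Splitting the integral by linearity then gives $\alpha D_{wp}(X_1,X_2)+(1-\alpha)D_{wp}(X_1,X_3)$.

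The only point needing care is the legitimacy of splitting the integral, which requires both $D_{wp}(X_1,X_2)$ and $D_{wp}(X_1,X_3)$ to be finite. We will invoke the earlier finiteness theorem for the WCPED under $\int_0^\infty w(y)\,dy<\infty$, which is part of the definition since $w$ is assumed integrable. With that theorem, each integrand is dominated by $w$, so the linearity of the Lebesgue integral applies. We also note that $F$ is indeed a CDF of a nonnegative continuous RV, being a convex combination of such CDFs. No further obstacle is expected.
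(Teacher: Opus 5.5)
Your proposal is correct and is the same direct computation the paper has in mind; the paper omits the proof as simple. Substituting $F_1-F=\alpha(F_1-F_2)+(1-\alpha)(F_1-F_3)$, multiplying by $w F_1$ and integrating gives the identity, and integrability of $w$ keeps every integral finite. (The representation $D_{wp}=I_{wp}-J_{wp}$ you mention in passing is off by a factor of $\tfrac12$ from the paper's definition of $D_{wp}$, an inconsistency inherited from the paper, but you work directly from the definition and the identity is linear, so this does not affect your argument.)
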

    	\begin{proof}
    		The proof of the theorem is simple. Thus, it is skipped.
    	\end{proof}

    	Let $F^{*}(y)=\frac{F_1(y)+F_2(y)}{2}.$ In the following results, we investigate some useful properties of the WCPED involving the distributions $F_1(\cdot)$, $F_2(\cdot),$ and $F^{*}(\cdot)$.
    	\begin{thm}
    		Consider two nonnegative continuous RVs $X_1$ and $X_2$ with CDFs $F_1(\cdot)$ and $F_2(\cdot)$, respectively. Let $F^{*}(y)=\frac{F_1(y)+F_2(y)}{2}.$
    		Then, the WCPED satisfy
    		\begin{equation}
    		D_{wp}(F^{*}\mid F_2)+D_{wp}(F^{*}\mid F_1)=0.
    		\end{equation}
    	\end{thm}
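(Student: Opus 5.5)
The plan is to compute the two divergences directly from the definition in (\ref{eq:3.1}) and observe that their integrands cancel pointwise.

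First, substitute $F^{*}=\frac{F_1+F_2}{2}$ into both terms:
\[
D_{wp}(F^{*}\mid F_2)=\int_{0}^{\infty}w(y)\bigl(F^{*}(y)-F_2(y)\bigr)F^{*}(y)\,dy,
\qquad
D_{wp}(F^{*}\mid F_1)=\int_{0}^{\infty}w(y)\bigl(F^{*}(y)-F_1(y)\bigr)F^{*}(y)\,dy.
\]

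The key step is the pair of identities
\[
F^{*}(y)-F_2(y)=\tfrac12\bigl(F_1(y)-F_2(y)\bigr),
\qquad
F^{*}(y)-F_1(y)=-\tfrac12\bigl(F_1(y)-F_2(y)\bigr).
\]
With these, the two integrands are $\pm\frac12 w(y)\bigl(F_1(y)-F_2(y)\bigr)F^{*}(y)$, so their sum is identically zero.

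Equivalently, one can invoke the earlier mixture result: the theorem $D_{wp}(X_1,X)=\alpha D_{wp}(X_1,X_2)+(1-\alpha)D_{wp}(X_1,X_3)$ is linear in the second argument. Applying it with reference distribution $F^{*}$ and $\alpha=\frac12$ gives $D_{wp}(F^{*}\mid F^{*})=\frac12 D_{wp}(F^{*}\mid F_1)+\frac12 D_{wp}(F^{*}\mid F_2)$. Since the left side vanishes, the claim follows.

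The only point needing care is justifying that the two integrals may be added. Each is finite under the integrability assumption $\int_0^\infty w(y)\,dy<\infty$, by the finiteness theorem for the WCPED, because $F^{*}$ is itself a CDF of a nonnegative random variable. Hence linearity of the integral applies and no $\infty-\infty$ issue arises. I expect no real obstacle beyond this remark.
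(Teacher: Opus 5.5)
Your proof is correct and is essentially the paper's argument: the paper also substitutes $F^{*}$ into the definition, reducing each term to $\pm\frac14\int_0^\infty w(y)\bigl(F_1^2(y)-F_2^2(y)\bigr)\,dy$ and adding them, whereas you observe the pointwise cancellation of the integrands before expanding. Your alternative route is also valid: it uses the mixture-linearity theorem with reference $F^{*}$ and $\alpha=\frac12$, together with $D_{wp}(F^{*}\mid F^{*})=0$.
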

    	
    	\begin{proof}
    		Using the definition of the WCPED, we have
    		\begin{align}\label{eq3.5}
    			D_{wp}(F^{*}\mid F_2)
    			&=\int_{0}^{\infty}
    			w(y)\left(\frac{F_1(y)+F_2(y)}{2}-F_2(y)\right)
    			\frac{F_1(y)+F_2(y)}{2}\,dy \nonumber\\
    			&=\frac12\int_{0}^{\infty}
    			w(y)\bigl(F_1(y)-F_2(y)\bigr)
    			\frac{F_1(y)+F_2(y)}{2}\,dy \nonumber\\
    			&=\frac14\int_{0}^{\infty}
    			w(y)\left(F_1^{2}(y)-F_2^{2}(y)\right)\,dy \nonumber\\
    			&=\frac14\left(J_{wp}(X_2)-J_{wp}(X_1)\right),
    		\end{align}
    		where	$J_{wp}(X_1)$ is defined in (\ref{eq2.2}).
    		Similarly,
    		\begin{equation}\label{eq3.6}
    			D_{wp}(F^{*}\mid F_1)=\frac14\left(J_{wp}(X_1)-J_{wp}(X_2)\right).
    		\end{equation}
    		Now, adding (\ref{eq3.5}) and (\ref{eq3.6}), we get the stated result. 
    		\end{proof} 
    		\begin{thm}\label{thm3.4}
    			Let $F^{*}(y)=\frac{F_1(y)+F_2(y)}{2}.$ Then, the WCPED between $F_1(\cdot)$ and $F_2(\cdot)$ is twice of that  between $F_1(\cdot)$ and $F^{*}(\cdot)$, that is
    			\[
    			D_{wp}(F_1\mid F_2)
    			=
    			2D_{wp}(F_1\mid F^{*}).
    			\]
    		\end{thm}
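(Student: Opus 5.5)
The identity follows by direct substitution into the definition \eqref{eq:3.1}, with no limiting arguments or inequalities needed. Write $D_{wp}(F_1\mid F^{*})=\int_0^\infty w(y)\bigl(F_1(y)-F^{*}(y)\bigr)F_1(y)\,dy$. The key pointwise observation is
\[
F_1(y)-F^{*}(y)=F_1(y)-\frac{F_1(y)+F_2(y)}{2}=\frac{1}{2}\bigl(F_1(y)-F_2(y)\bigr),\qquad y\ge 0.
\]
Substituting this into the integrand and pulling the constant $\tfrac12$ outside the integral gives
\[
D_{wp}(F_1\mid F^{*})=\frac12\int_0^\infty w(y)\bigl(F_1(y)-F_2(y)\bigr)F_1(y)\,dy=\frac12 D_{wp}(F_1\mid F_2),
\]
and multiplying by $2$ yields the claim.

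Alternatively, the result is a special case of the earlier theorem on mixtures in the second argument. There one takes $F_3=\alpha F_1+(1-\alpha)F_2$ and obtains $D_{wp}(X_1,X_3)=(1-\alpha)D_{wp}(X_1,X_2)$. Setting $\alpha=\tfrac12$ makes $F_3=F^{*}$ and gives $D_{wp}(F_1\mid F^{*})=\tfrac12 D_{wp}(F_1\mid F_2)$ immediately. I would present the direct computation and mention this as a cross-check.

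The only point needing care is that the integrals are finite, so that splitting off the constant is legitimate. This follows from the earlier finiteness theorem for the WCPED under $\int_0^\infty w(y)\,dy<\infty$. Note also that $F^{*}$ is itself a valid CDF of a nonnegative random variable, so $D_{wp}(F_1\mid F^{*})$ is well defined. There is no genuine obstacle beyond this bookkeeping.
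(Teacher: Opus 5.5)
Your proposal is correct and follows the same route as the paper: substitute $F^{*}$ into the definition, use $F_1(y)-F^{*}(y)=\tfrac12\bigl(F_1(y)-F_2(y)\bigr)$ pointwise, and pull the factor $\tfrac12$ outside the integral. Your cross-check, the mixture theorem with $\alpha=\tfrac12$, is also valid.
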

    	\begin{proof}
    		From the definition
    		\begin{align*}
    			D_{wp}(F_1\mid F^{*})
    			&=
    			\int_{0}^{\infty}
    			w(y)
    			\left(
    			F_1(y)-\frac{F_1(y)+F_2(y)}2
    			\right)
    			F_1(y)\,dy\\
    			&=
    			\frac12
    			\int_{0}^{\infty}
    			w(y)
    			(F_1(y)-F_2(y))F_1(y)\,dy\\
    			&=
    			\frac12D_{wp}(F_1\mid F_2).
    		\end{align*}
    		This completes the proof.
    	\end{proof}
    	\begin{corollary}
    		Let $F^{*}(y)=\frac{F_1(y)+F_2(y)}{2}.$ Then,
    		\begin{equation}
    		D_{wp}(F_1\mid F^{*})
    		+
    		D_{wp}(F_2\mid F^{*})
    		=
    		\frac12
    		\left[
    		D_{wp}(F_1\mid F_2)
    		+
    		D_{wp}(F_2\mid F_1)
    		\right].
    		\end{equation}
    	\end{corollary}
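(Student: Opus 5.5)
The plan is to derive the corollary directly from Theorem \ref{thm3.4}, applying it twice with the roles of $F_1(\cdot)$ and $F_2(\cdot)$ interchanged. The key observation is that $F^{*}(y)=\frac{F_1(y)+F_2(y)}{2}$ is symmetric in $F_1$ and $F_2$. Because of this, the same mixture appears whichever distribution is taken as the reference.

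First, Theorem \ref{thm3.4} as stated gives
\[
D_{wp}(F_1\mid F^{*})=\frac12 D_{wp}(F_1\mid F_2).
\]
Next, I would rerun the computation in the proof of Theorem \ref{thm3.4} with $F_2(\cdot)$ as the reference distribution. Expanding the definition in (\ref{eq:3.1}) gives
\[
D_{wp}(F_2\mid F^{*})=\int_{0}^{\infty}w(y)\left(F_2(y)-\frac{F_1(y)+F_2(y)}{2}\right)F_2(y)\,dy
=\frac12\int_{0}^{\infty}w(y)\bigl(F_2(y)-F_1(y)\bigr)F_2(y)\,dy,
\]
and the last integral equals $\frac12 D_{wp}(F_2\mid F_1)$. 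This step only uses linearity of the integral, which is justified because each integral is finite under the integrability assumption on $w(\cdot)$; this finiteness was established earlier for the WCPED.

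Finally, I would add the two identities to obtain the stated relation.

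There is no real obstacle in this argument. The only point needing care is that the WCPED is not symmetric: $D_{wp}(F_2\mid F_1)$ carries the factor $F_2(y)$, not $F_1(y)$. The second application of the theorem must therefore be made with the arguments swapped, rather than by reusing the first identity.
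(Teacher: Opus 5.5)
Your proposal is correct and follows essentially the paper's route. The paper gives no separate proof of the corollary: it applies Theorem~\ref{thm3.4} once as stated and once with the roles of $F_1(\cdot)$ and $F_2(\cdot)$ interchanged (made explicit right after the corollary, when deriving the equivalent form of $\mathcal{SD}_{wp}$), and then adds the two identities, exactly as you do, including your correct care that the swapped term carries the factor $F_2(y)$.
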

    	 The WCPED is generally asymmetric, that is
    	${D}_{wp}(X_1,X_2)\neq {D}_{wp}(X_2,X_1)$.
    	Its value depends on the reference distribution. Although this asymmetry is useful in many applications, a symmetric measure is often preferred when both distributions are to be compared on an equal footing.
    	Symmetric divergence measures eliminate the effect of ordering and provide a balanced assessment of the discrepancy between two probability models. Inspired by the idea of Jensen--Shannon divergence  and motivated by the construction
    	of the symmetric divergence measures in information theory, here we introduce a symmetric version of the WCPED. Let $F^{*}=(F_1+F_2)/2$ denote the average CDF of $F_1(\cdot)$ and $F_2(\cdot)$. The following definition formalizes the proposed measure.
    	\begin{definition}
    		Let $X_1$ and $X_2$ be two nonnegative continuous RVs with CDFs $F_1(\cdot)$ and $F_2(\cdot)$, respectively. Then, the symmetric WCPED
    		between $F_1(\cdot)$ and $F_2(\cdot)$ is defined by
    		\begin{equation}\label{eq3.8}
    			\mathcal{SD}_{wp}(F_1,F_2)=\frac12\left[
    			{D}_{wp}\!\left(F_1\,\middle|\,F^{*}\right)+
    			{D}_{wp}\!\left(F_2\,\middle|\,F^{*}\right)\right].
    		\end{equation}
    	\end{definition}
    	Using Theorem \ref{thm3.4}, we obtain ${D}_{wp}\!\left(F_1\,\middle|\,F^{*}\right)
    	=\frac12\,{D}_{wp}(F_1|F_2),$ and similarly,
    	${D}_{wp}\!\left(F_2\,\middle|\,F^{*}\right)
    	=\frac12\,{D}_{wp}(F_2|F_1).$
    	Substituting these expressions in (\ref{eq3.8}) yields the equivalent form
    	\begin{equation}
    		\mathcal{SD}_{wp}(F_1,F_2)=\frac14\left[{D}_{wp}(F_1|F_2)
    		+{D}_{wp}(F_2|F_1)\right].
    	\end{equation}
      It is immediate from the above representation that the proposed divergence is
    	symmetric. Consequently,
    	\[\mathcal{SD}_{wp}(F_1,F_2)=
    	\mathcal{SD}_{wp}(F_2,F_1),
    	\]
    	which shows that the measure is independent of the order of comparison. 
    \subsection{Uniformity test based on the WCPED}
     One of the important applications of divergence measure is in goodness-of-fit testing, where the objective is to determine whether a given sample follows a specified probability model. In this subsection, we develop a goodness-of-fit test for the uniform distribution by utilizing the proposed WCPED. The proposed test is motivated by the discrepancy between the empirical distribution of the observed data and the theoretical CDF of the uniform distribution. Since the WCPED equals zero when the two distributions coincide and deviates from zero otherwise, it provides a natural basis for constructing a test statistic.
    
    Let $Y_1,\ldots,Y_n$ be independent and identically distributed  random variables with CDF $F_1(\cdot)$. Denote their corresponding order statistics by
    $Y_{(1)}\le\cdots\le Y_{(n)}.$ Suppose that the support of the distribution is $(0,a)$, where the unknown parameter $a$ is estimated by $a=2E(X_1).$ Under the null hypothesis, the CDF of the uniform
    distribution is $F_0(y)=\frac{y}{a},~ 0\le y\le a.$ We consider the following hypotheses:
    \begin{equation}
    H_0:\;F_1(y)=F_0(y)\quad
    \text{against}\quad
    H_1:\;F_1(y)\neq F_0(y).
    \end{equation}
     Under $H_0$, the WCPED between $F_1(\cdot)$ and $F_0(\cdot)$ vanishes, that is
     $ D_{wp}(F\,|\,F_0)=0.$ Consequently, the values of the divergence substantially larger than zero indicate departure from uniformity and provide evidence against the null hypothesis. Since the true distribution function $F_1(\cdot)$ is unknown in practice, the theoretical divergence cannot be evaluated directly. Therefore, we replace the unknown quantities by their corresponding plug-in estimators obtained from the observed sample and construct an empirical test statistic based on the proposed divergence measure.
    Using the definition of WCPED together with the weight function $w(y)=y$, we obtain
    \begin{equation}
    D_{wp}(F_1\,|\,F_0)=\frac12\int_0^a y\left(F_1(y)-\frac{y}{a}\right)F_1(y)\,dy.
    \end{equation}
    After straightforward simplification, the divergence can be written as
    \begin{equation}\label{eq3.12}
    D_{wp}(F_1\,|\,F_0)=-\frac{a^2}{24}+\frac{E(X_1^3)}{6a}.
    \end{equation}
    Since both $a$ and $E(X_1^3)$ are unknown, they are estimated by their consistent estimators
    \begin{equation}\label{eq3.13}
    \hat a=2\bar X_1, \quad \widehat{E(X_1^3)}=
    \frac1n\sum_{j=1}^{n}Y_j^3,
    \end{equation}
    where $\bar {X_1}=\frac1n\sum_{j=1}^{n}Y_j.$
    Substituting (\ref{eq3.13}) in  (\ref{eq3.12}), the proposed
    plug-in test statistic is
    \begin{equation}
    T_n=-\frac{\bar X_1^2}{6}+\frac{1}{12\bar X_1}
    \left(\frac1n\sum_{j=1}^{n}Y_j^3\right).
   \end{equation}
    By the strong law of large numbers,
    \begin{equation}
    \bar X_1 \xrightarrow{a.s.}E(X_1)~\text{and}~ \frac1n\sum_{j=1}^{n}Y_j^3
    \xrightarrow{a.s.}E(X_1^3),
    \end{equation}
    which imply that
    $T_n$ is a consistent estimator of the WCPED. Under the null
    hypothesis $H_0$, the WCPED vanishes, and therefore
    $T_n \xrightarrow{P} 0 ~\text{as } n\to\infty.$
    This result provides the motivation for employing $T_n$ as a
    goodness-of-fit statistic for testing uniformity. Larger values of
    $T_n$ correspond to greater discrepancies from the uniform distribution
    and hence provide stronger evidence against $H_0$. At significance level $\alpha$, the proposed test rejects $H_0$ if
    $$T_n \geq T_{n,1-\alpha},$$ where $T_{n,1-\alpha}$ is the $(1-\alpha)$-th quantile of the null distribution of $T_n$. In practice, the critical value $T_{n,1-\alpha}$ can be obtained by Monte Carlo simulation from $U(0,1)$, since the exact distribution of \( T_n \) under \( H_0 \) is not available analytically. We generated \( 100000 \) samples of size \( n \) from the uniform \( U(0,1) \) distribution and computed the empirical percentiles for various significance levels.
    \\
    \\
	\noindent 
	\textbf{Critical values:}
	Table \ref{tab:critical_values} reports the critical values of the test statistic $T_n$ for the $U(0,1)$ distribution at the significance levels $\alpha=0.01$ and $0.05$, considering different sample sizes $n$. The results show a clear decreasing trend in the critical values as the sample size increases. This behavior is consistent with the fact that, under the null hypothesis, the test statistic tends to converge toward zero as $n$ becomes larger.
	\begin{table}[htbp]
		\centering
		\caption{Critical values of the test statistic \( T_n \) for \( U(0,1). \)} 
		\begin{tabular}{rrrrrrrrr}
			\toprule
			\multicolumn{1}{c}{$n$} & \multicolumn{1}{c}{$\alpha = 0.01$} & \multicolumn{1}{c}{$\alpha = 0.05$} & 
			\multicolumn{1}{c}{$n$} & \multicolumn{1}{c}{$\alpha = 0.01$} & \multicolumn{1}{c}{$\alpha = 0.05$} & 
			\multicolumn{1}{c}{$n$} & \multicolumn{1}{c}{$\alpha = 0.01$} & \multicolumn{1}{c}{$\alpha = 0.05$} \\
			\midrule
			10 & 0.021295 & 0.014502 & 22 & 0.014858 & 0.010131 &34 & 0.012055 & 0.008233  \\
			11 & 0.020461 & 0.013863 & 23 & 0.014483 & 0.009874 &  35 & 0.011842 & 0.008175 \\
			12 & 0.019594 & 0.013471 & 24 & 0.014074 & 0.009723 &  36 & 0.011602 & 0.008007 \\
			13 & 0.018794 & 0.012905 & 25 & 0.013885 & 0.009479 & 37 & 0.011420 & 0.007883  \\
			14 & 0.018169 & 0.012498 & 	26 & 0.013504 & 0.009341 & 38 & 0.011395 & 0.007764  \\
			15 & 0.017614 & 0.012121 & 27 & 0.013400 & 0.009224 & 39 & 0.011114 & 0.007655\\
			16 & 0.017221 & 0.011852 & 28 & 0.013171 & 0.009024 & 40 & 0.011087 & 0.007625\\
			17 & 0.016782 & 0.011636 & 29 & 0.012947 & 0.008801 & 45 & 0.010882 & 0.007654  \\
			18 & 0.016126 & 0.011148 & 30 & 0.012697 & 0.008714 & 50 & 0.010199 & 0.007223 \\
			19 & 0.015776 & 0.010949 & 31 & 0.012502 & 0.008541 & 55 & 0.009773 & 0.006970\\
			20 & 0.015688 & 0.010932 &  32 & 0.012393 & 0.008476 & 60 & 0.009480 & 0.006706\\
			21 & 0.015216 & 0.010303 & 33 & 0.012230 & 0.008401 & 100 & 0.007477 & 0.005291  \\
			\bottomrule
		\end{tabular}
		\label{tab:critical_values}
	\end{table}
	\noindent
	\\
	\\
	\textbf{Power analysis:}
	To evaluate the performance of the proposed test, we conduct a power study under various alternative distributions. Table \ref{tab:power_beta} presents the power estimates of the test when the alternative distribution follows a beta distribution with different parameter combinations. The results in Table \ref{tab:power_beta} demonstrate power performance of the proposed test. For larger sample sizes (\( n = 40 \)), the test achieves power close to 1 for all considered beta alternatives, indicating its superior performance in detecting departures from uniformity. The power increases substantially with sample size for all alternatives, confirming the consistency of the test.
	\begin{table}[htbp]
		\centering
		\caption{Power estimates for alternative Beta distributions.}
		\begin{tabular}{lcccccc}
			\toprule
			\multirow{2}{*}{Distribution} & \multicolumn{2}{c}{$n = 10$} & \multicolumn{2}{c}{$n = 20$} & \multicolumn{2}{c}{$n = 40$} \\
			\cmidrule(lr){2-3} \cmidrule(lr){4-5} \cmidrule(lr){6-7}
			& $\alpha = 0.01$ & $\alpha = 0.05$ & $\alpha = 0.01$ & $\alpha = 0.05$ & $\alpha = 0.01$ & $\alpha = 0.05$ \\
			\midrule
			Beta(0.5, 1) & 0.210 & 0.438 & 0.470 & 0.747 & 0.864 & 0.969 \\
			Beta(0.1, 0.5) & 0.688 & 0.778 & 0.935 & 0.966 & 0.997 & 0.999 \\
			Beta(0.3, 0.5) & 0.519 & 0.702 & 0.836 & 0.933 & 0.974 & 0.997 \\
			\bottomrule
		\end{tabular}
		\label{tab:power_beta}
	\end{table}
	\\
	\\
	\textbf{Power comparison with other tests:}
	We compare the power of the proposed WCPED-based test with several well-known goodness-of-fit tests for uniformity, including:
	\begin{enumerate}
		\item Kolmogorov-Smirnov (KS) statistic (see \cite{an1933sulla}):
		\begin{equation}
			KS = \max\left(\max_{1 \leq j \leq n}\left(\frac{j}{n} - Y_{(j)}\right), \max_{1 \leq j \leq n}\left(Y_{(j)} - \frac{j-1}{n}\right)\right).
		\end{equation}
		\item Anderson-Darling (AD) statistic (see \cite{anderson1954test}):
		\begin{equation}
			AD = -\frac{2}{n}\sum_{j=1}^{n}\left(\left(j-\frac{1}{2}\right)\log Y_{(j)} + \left(n-j+\frac{1}{2}\right)\log(1-Y_{(j)})\right) - n.
		\end{equation}
		\item Cram\'er-von Mises (CM) statistic (see \cite{cramer1928composition} and \cite{von1945wahrscheinlichkeitsrechnung}):
		\begin{equation}
			CM = \sum_{j=1}^{n}\left(Y_{(j)} - \frac{2j-1}{2n}\right)^2 + \frac{1}{12n}.
		\end{equation}
		\item Zamanzade (TB) statistic (see \cite{zamanzade2015testing}):
		\begin{equation}
			TB = \sum_{j=1}^{n} \log\left(\frac{Y_{(j+m)} - Y_{(j-m)}}{\hat{F}_n(Y_{(j+m)}) - \hat{F}_n(Y_{(j-m)})}\right) \times \frac{\hat{F}_n(Y_{(j+m)}) - \hat{F}_n(Y_{(j-m)})}{\sum_{j=1}^{n}\left(\hat{F}_n(Y_{(j+m)}) - \hat{F}_n(Y_{(j-m)})\right)},
		\end{equation}
		where \( m \) is the window size and 
		\begin{equation}
			\hat{F}_n(Y_{(j)}) = \frac{1}{n+2}\left(j + \frac{Y_{(j)} - Y_{(j-1)}}{Y_{(j+1)} - Y_{(j-1)}}\right), \quad j = 1, \ldots, n.
		\end{equation}
		\item Extropy-based (TU) statistic (see \cite{qiu2018extropy}):
		\begin{equation}
			TU = \frac{1}{2n}\sum_{j=1}^{n}\frac{d_j m/n}{Y_{(j+m)} - Y_{(j-m)}},
		\end{equation}
		where
		\begin{equation}
			d_j = \begin{cases}
				1 + \frac{j-1}{m}, & 1 \leq j \leq m, \\
				2, & m+1 < j < n-m, \\
				1 + \frac{n-j}{m}, & n-m+1 \leq j \leq n-1.
			\end{cases}
		\end{equation}
	\end{enumerate}
	\textbf{Alternative Distributions:}
	Following  \cite{zamanzade2015testing}, we consider some alternative distributions:
    \[
	\begin{aligned}
		A_j &: G(y) = 1 - (1 - y)^j, \quad 0 \leq y \leq 1 \quad (j = 1.5, 2), \\
		B_j &: G(y) = \begin{cases} 
			2^{j-1}y^j, & 0 \leq y \leq 0.5, \\
			1 - 2^{j-1}(1 - y)^j, & 0.5 \leq y \leq 1,
		\end{cases} \quad (j = 1.5, 2, 3), \\
		C_j &: G(y) = \begin{cases} 
			0.5 - 2^{j-1}(0.5 - y)^j, & 0 \leq y \leq 0.5, \\
			0.5 + 2^{j-1}(y - 0.5)^j, & 0.5 \leq y \leq 1,
		\end{cases} \quad (j = 1.5, 2).
	\end{aligned}
	\]
	Table \ref{tab:5} presents the power estimates of the proposed test compared with existing tests for different alternatives and significance levels. From Table \ref{tab:5}, we observe that the proposed test demonstrates competitive power performance compared to the existing tests. The proposed test performs particularly well against the centered alternatives $C_{1.5}$ and $C_2$. Its performance is especially notable for $C_{1.5}$, where it frequently provides the highest power among the competing procedures across the considered sample sizes. For example, when $n=40$ the power of $T_n$ under $C_{1.5}$ 
	is $0.166$ and $0.357$ at the $1\%$ and $5\%$ significance levels, respectively. Under $C_{2}$, the corresponding powers increase substantially to $0.419$ and $0.604$. These results suggest that the proposed statistic is sensitive to departures involving changes in the central structure of the distribution.
	For the $A_j$ alternatives, the classical KS, AD, and CM tests generally exhibit higher power than $T_n$. A similar pattern is observed for the $B_j$ alternatives, particularly for $B_2$ and $B_3$, where the TU, AD, or CM procedures can provide considerably greater power. This indicates that the proposed $T_n$ test should not necessarily be viewed as uniformly superior to all existing goodness-of-fit tests. Rather, its main advantage appears in detecting particular types of alternatives, especially those represented by the $C_j$ family. An interesting feature is observed for the $C_2$ alternative as the sample size increases. Although $T_n$ has relatively high power at $n=20$ and $n=40$, its advantage becomes less pronounced at $n=80$. This does not imply that the proposed test becomes ineffective. Instead, the competing tests, particularly AD and CM, gain power more rapidly as the sample size increases. Consequently, the relative advantage of $T_n$ decreases even though its absolute power continues to increase. This behavior is quite natural in power comparisons: a test may be highly effective for a particular alternative at moderate sample sizes, while another test may eventually become more powerful as more observations become available.

\begin{table}[htbp]
	\centering
	\caption{Power estimates of the uniformity tests under different alternatives (WCPED highlighted in bold).}
	\scriptsize
	\begin{tabular}{lcccccccccccccccc}
		\toprule
		\multirow{2}{*}{$n$} & \multirow{2}{*}{Test}
		& \multicolumn{2}{c}{$A_{1.5}$}
		& \multicolumn{2}{c}{$A_2$}
		& \multicolumn{2}{c}{$B_{1.5}$}
		& \multicolumn{2}{c}{$B_2$}
		& \multicolumn{2}{c}{$B_3$}
		& \multicolumn{2}{c}{$C_{1.5}$}
		& \multicolumn{2}{c}{$C_2$} \\
		\cmidrule(lr){3-4}
		\cmidrule(lr){5-6}
		\cmidrule(lr){7-8}
		\cmidrule(lr){9-10}
		\cmidrule(lr){11-12}
		\cmidrule(lr){13-14}
		\cmidrule(lr){15-16}
		& & 0.01 & 0.05
		& 0.01 & 0.05
		& 0.01 & 0.05
		& 0.01 & 0.05
		& 0.01 & 0.05
		& 0.01 & 0.05
		& 0.01 & 0.05 \\
		\midrule
		
		\multirow{6}{*}{20}
		& \textbf{$T_n$}
		& 0.022 & 0.099
		& 0.023 & 0.127
		& 0.000 & 0.001
		& 0.000 & 0.000
		& 0.000 & 0.000
		& \textbf{0.102} & \textbf{0.249}
		& \textbf{0.263} & \textbf{0.440} \\
		
		& KS
		& 0.098 & 0.285
		& 0.381 & 0.710
		& 0.004 & 0.061
		& 0.012 & 0.115
		& 0.060 & 0.398
		& 0.044 & 0.171
		& 0.132 & 0.294 \\
		
		& AD
		& 0.113 & 0.273
		& 0.478 & 0.783
		& 0.002 & 0.033
		& 0.002 & 0.089
		& 0.027 & 0.540
		& 0.051 & 0.166
		& 0.139 & 0.376 \\
		
		& CM
		& 0.102 & 0.299
		& 0.506 & 0.771
		& 0.002 & 0.036
		& 0.007 & 0.106
		& 0.055 & 0.481
		& 0.035 & 0.099
		& 0.077 & 0.248 \\
		
		& TB
		& 0.000 & 0.002
		& 0.000 & 0.000
		& 0.000 & 0.001
		& 0.000 & 0.000
		& 0.000 & 0.000
		& 0.099 & 0.282
		& 0.199 & 0.438 \\
		
		& TU
		& 0.074 & 0.195
		& 0.225 & 0.495
		& 0.093 & 0.346
		& 0.410 & 0.718
		& 0.897 & 0.997
		& 0.006 & 0.040
		& 0.022 & 0.061 \\
		
		\midrule
		
		\multirow{6}{*}{40}
		& \textbf{$T_n$}
		& 0.030 & 0.185
		& 0.062 & 0.229
		& 0.000 & 0.001
		& 0.000 & 0.000
		& 0.000 & 0.000
		& \textbf{0.166} & \textbf{0.357}
		& \textbf{0.419} & \textbf{0.604} \\
		
		& KS
		& 0.252 & 0.529
		& 0.817 & 0.970
		& 0.012 & 0.115
		& 0.062 & 0.361
		& 0.566 & 0.928
		& 0.078 & 0.219
		& 0.222 & 0.595 \\
		
		& AD
		& 0.327 & 0.596
		& 0.915 & 0.975
		& 0.003 & 0.111
		& 0.080 & 0.546
		& 0.864 & 0.996
		& 0.069 & 0.241
		& 0.326 & 0.683 \\
		
		& CM
		& 0.353 & 0.603
		& 0.907 & 0.980
		& 0.007 & 0.074
		& 0.054 & 0.470
		& 0.729 & 0.981
		& 0.044 & 0.194
		& 0.181 & 0.545 \\
		
		& TB
		& 0.000 & 0.003
		& 0.000 & 0.000
		& 0.000 & 0.000
		& 0.000 & 0.000
		& 0.000 & 0.000
		& 0.074 & 0.238
		& 0.062 & 0.239 \\
		
		& TU
		& 0.131 & 0.349
		& 0.617 & 0.862
		& 0.288 & 0.603
		& 0.842 & 0.982
		& 1.000 & 1.000
		& 0.010 & 0.036
		& 0.040 & 0.120 \\
		
		\midrule
		
		\multirow{6}{*}{80}
		& \textbf{$T_n$}
		& 0.099 & 0.368
		& 0.221 & 0.574
		& 0.000 & 0.000
		& 0.000 & 0.000
		& 0.000 & 0.000
		& \textbf{0.300} & \textbf{0.513}
		& 0.641 & 0.785 \\
		
		& KS
		& 0.579 & 0.860
		& 0.999 & 0.999
		& 0.061 & 0.249
		& 0.401 & 0.821
		& 0.993 & 1.000
		& 0.149 & 0.403
		& 0.621 & 0.902 \\
		
		& AD
		& 0.723 & 0.897
		& 0.999 & 1.000
		& 0.077 & 0.402
		& 0.799 & 0.985
		& 1.000 & 1.000
		& 0.158 & 0.462
		& 0.808 & 0.978 \\
		
		& CM
		& 0.656 & 0.894
		& 1.000 & 1.000
		& 0.037 & 0.253
		& 0.587 & 0.948
		& 1.000 & 1.000
		& 0.089 & 0.379
		& 0.670 & 0.943 \\
		
		& TB
		& 0.000 & 0.001
		& 0.000 & 0.000
		& 0.000 & 0.000
		& 0.000 & 0.000
		& 0.000 & 0.000
		& 0.044 & 0.192
		& 0.013 & 0.042 \\
		
		& TU
		& 0.418 & 0.677
		& 0.964 & 0.994
		& 0.705 & 0.901
		& 0.999 & 1.000
		& 1.000 & 1.000
		& 0.025 & 0.086
		& 0.257 & 0.492 \\
		
		\bottomrule
	\end{tabular}
	\label{tab:5}
\end{table}
	\subsubsection{Illustrative examples}
	To demonstrate the practical application of the proposed test, we present two examples.
	
	\begin{enumerate}
		\item \textbf{Uniform data (should not reject)}: A sample of size \( n = 50 \) generated from \( U(0,1) \) yields:
		\begin{itemize}
			\item Test statistic: \( T_n = -0.001229; \)
			\item Critical value (\( \alpha = 0.05 \)): \( 0.007223; \)
			\item P-value: \( 0.571420; \)
			\item Decision: \textbf{Fail to reject} \( H_0 \) (Uniform distribution).
		\end{itemize}
		
		\item \textbf{Beta(0.5,1) data (should reject)}: A sample of size \( n = 50 \) generated from Beta(0.5,1) yields:
		\begin{itemize}
			\item Test statistic: \( T_n = 0.015311; \)
			\item Critical value (\( \alpha = 0.05 \)): \( 0.007223; \)
			\item P-value: \( 0.000080; \)
			\item Decision: \textbf{Reject} \( H_0 \) (Uniform distribution).
		\end{itemize}
		\end{enumerate}
The WCPED-based uniformity test offers a valuable addition to the existing battery of goodness-of-fit tests. The test statistic is derived from the WCPED, providing a theoretically justified measure of discrepancy between distributions. The simulation results demonstrate that the test maintains good power properties across various alternatives and sample sizes, making it a reliable tool for practitioners.
\subsection{Application in image analysis}
Here, we present a comprehensive image analysis using the WCPED estimator to quantify dissimilarities between images and their resolution-reduced versions. Using $w(y)=y,$ the WCPED estimator, defined as:
\begin{equation}
	\widehat{D}_{wp}(X_1,X_2) = \int_{0}^{\infty}y ({F}_{1n}(y)-{F}_{2n}(y)){F}_{1n}(y)dy
	\label{eq:wcped},
\end{equation}
where ${F}_{1n}(\cdot)~~\text{and}~~{F}_{2n}(\cdot)$ defined in (\ref{eq2.10}) and provides a robust measure of divergence between two probability distributions. In this context, the distributions represent pixel intensity values from grayscale images, where pixel values range from $0$ (black) to $1$ (white). Two grayscale images, denoted as Image A (4059 $\times$ 3040 pixels) and Image B (2831 $\times$ 3537 pixels) are analyzed. Resolution-reduced versions have been generated at $75\%$, $50\%$, and $25\%$ of the original resolution, resulting in the following variants:
\begin{itemize}
	\item Image A: Original (12,339,360 pixels), A75 (6,940,320 pixels), A50 (3,085,600 pixels), A25 (771,400 pixels)
	\item Image B: Original (10,013,247 pixels), B75 (5,632,319 pixels), B50 (2,503,488 pixels), B25 (625,872 pixels)
\end{itemize}
The resolution-reduced versions of both images are presented in Figure \ref{fig:figure6}. The $\widehat{D}_{wp}(X_1,X_2)$ estimator is applied to compare same images at different resolutions (e.g., A vs A75, A vs A50, A vs A25) and different images at corresponding resolutions (e.g., A vs B, A75 vs B75).
The $\widehat{D}_{wp}(X_1,X_2)$  was estimated using the weight function w(y)=y. The results for the same-image comparisons are presented in Table \ref{tab:same}. As expected, the divergence between an image and itself is zero. For image A, the estimated magnitude of $\widehat{D}_{wp}(X_1,X_2)$ increases from 0.000253 at $75\%$ resolution to $0.004453$ at $50\%$ and $0.005419$ at $25\%$ resolution. This indicates an increasing dissimilarity as the resolution is progressively reduced. For image B, the corresponding magnitude of $\widehat{D}_{wp}(X_1,X_2)$  values are $0.000451$, $0.001757$, and $0.005286$, respectively. Although a small fluctuation is observed at $50\%$ resolution, the divergence becomes substantially larger at $25\%$, indicating greater dissimilarity at severe resolution reduction. The obtained  $|\widehat{D}_{wp}(X_1,X_2)|$ values remain close to zero even at $25\%$ resolution, indicating that the reduction in image resolution causes only a minor change in the underlying pixel-value distributions. This suggests that the essential distributional characteristics of the images are largely preserved despite substantial resolution reduction. Table \ref{tab:diff} reports the WCPED values between the two different images. The estimated $|\widehat{D}_{wp}(X_1,X_2)|$ values are 0.024861, 0.023897, 0.024354, and 0.028632 for the $100\%$, $75 \%$, $50\%$, and $25\%$ resolutions, respectively. These values are considerably larger than the corresponding same-image comparisons, demonstrating that the proposed WCPED estimator is able to distinguish between the two images based on their pixel-intensity distributions. Overall, the results demonstrate the usefulness of the WCPED estimator for measuring image dissimilarity and assessing the effect of resolution reduction.
\begin{figure}[htbp]
	\centering
	\includegraphics[width=0.8\textwidth]{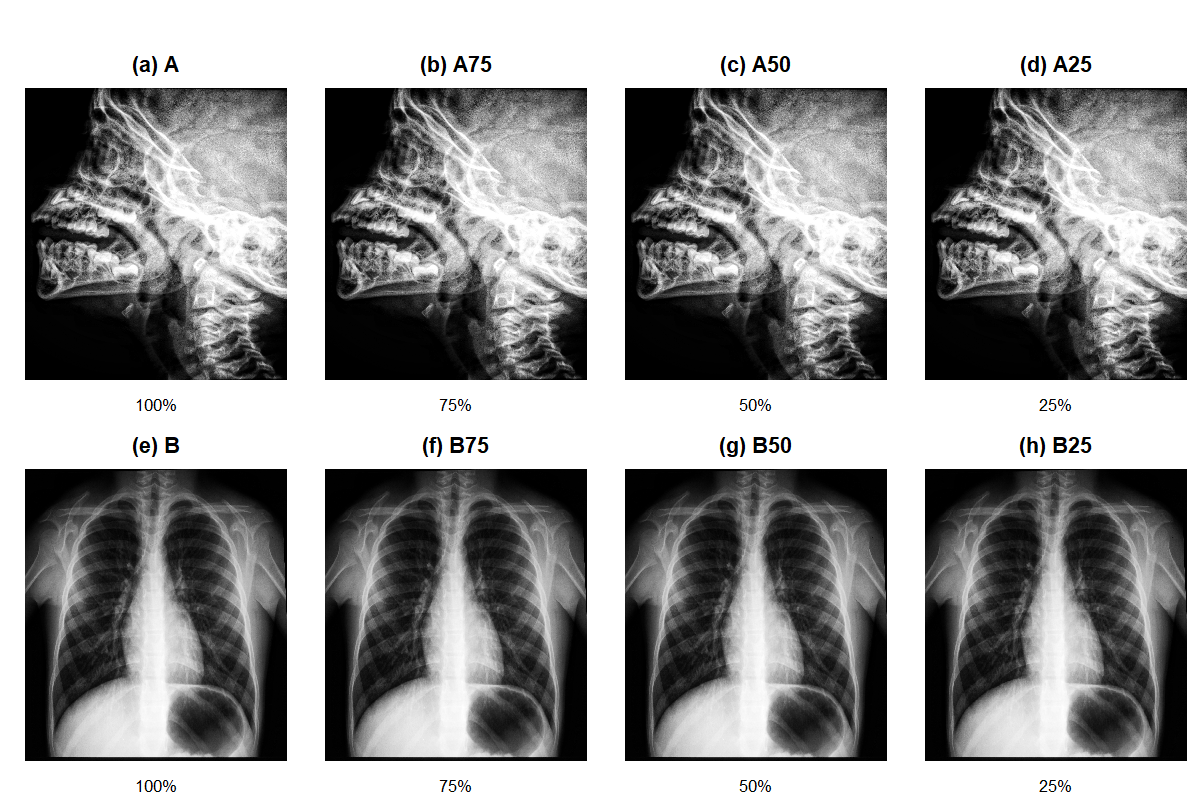}
	\caption{Original images and their corresponding resolution-reduced versions.}
	\label{fig:figure6}
\end{figure}

\begin{table}[H]
	\centering
	\caption{Empirical $\widehat{D}_{wp}(X_1,X_2)$  estimates for the same image comparisons.}
	\label{tab:same}
	\begin{tabular}{lccc}
		\toprule
		\textbf{Image 1} & \textbf{Image 2} & \textbf{Resolution} & $|\widehat{D}_{wp}|$ \\
		\midrule
		A & A & 100\% & 0.000000 \\
		A & A75 & 75\% & 0.000253 \\
		A & A50 & 50\% & 0.004453 \\
		A & A25 & 25\% & 0.005419 \\
		B & B & 100\% & 0.000000 \\
		B & B75 & 75\% & 0.000451 \\
		B & B50 & 50\% & 0.001757 \\
		B & B25 & 25\% & 0.005286 \\
		\bottomrule
	\end{tabular}
\end{table}
\begin{table}[H]
	\centering
	\caption{Empirical $\widehat{D}_{wp}(X_1,X_2)$  estimates for different image comparisons.}
	\label{tab:diff}
	\begin{tabular}{lcc}
		\toprule
		\textbf{Image 1} & \textbf{Image 2} & $|\widehat{D}_{wp}|$ \\
		\midrule
		A & B & 0.024861 \\
		A75 & B75 & 0.023897 \\
		A50 & B50 & 0.024354 \\
		A25 & B25 & 0.028632 \\
		\bottomrule
	\end{tabular}
\end{table}

\section{Dynamic weighted cumulative past extropy-inaccuracy and divergence}\label{sec4}
In reliability theory and survival analysis, the amount of information available about a random lifetime may change as time progresses. In particular, once an individual has survived up to a specified time point, the remaining lifetime provides additional information about the underlying distribution. This motivates the development of dynamic versions of information measures, where the measures are evaluated conditionally at a given time point. In this section, we introduce dynamic forms of the WCPEI and the WCPED measures. We also establish some useful properties of these measures.
\begin{definition}
	Suppose that $X_1$ and $X_2$ are two nonnegative continuous
	RVs with respective CDFs $F_1(\cdot)$ and $F_2(\cdot)$. For a fixed time point $t>0$, define the RVs by $X_{1,t}=(X_1-t\mid X_1\leq t), \quad X_{2,t}=(X_2-t\mid X_2\leq t).$ Then,
	the DWCPEI between $X_{1,t}$ and $X_{2,t}$ is defined as
	\begin{equation}\label{eq4.1}
	\xi I_{wp}(F_{1,t},F_{2,t};t)=-\frac{1}{2}\int_0^t w(y)
	\frac{F_1(y)F_2(y)}{F_1(t)F_2(t)}\,dy,
	\end{equation}
	provided that the integral exists.
\end{definition}
The above quantity measures the weighted past extropy-inaccuracy between
two distributions up to the time point $t$. When the two distributions coincide, the corresponding dynamic inaccuracy measure reduces to the dynamic WCPE of $X_1$, which is  defined in (\ref{eq1.4}).
Furthermore, the  DWCPEI can be written as:
\begin{equation}
\xi I_{wp}(F_{1,t},F_{2,t};t)
=
\frac{1}{F_1(t)F_2(t)}\left[I_{wp}(X_1,X_2)+
\frac{1}{2}\int_0^t w(y)F_1(y)F_2(y)\,dy\right]=
 h(t)(I_{wp}(X_1, X_2)+s_{w}(t)),
\end{equation}
where 
$h(t)=\frac{1}{F_1(t)F_2(t)}$ and $s_w(t)=
\frac{1}{2} \int_0^t w(y)F_1(y)F_2(y)\,dy.$
\begin{thm}\label{thm4.1}
	Let $X_{1,t}$ and $X_{2,t}$ be two RVs with CDFs $F_{1,t}(\cdot)$ and $F_{2,t}(\cdot)$, respectively. Then, $\xi I_{wp}(F_{1,t},F_{2,t};t)$ satisfies the following differential equation:
	\begin{eqnarray}\label{eq4.3}
	\begin{aligned}
		\frac{d}{dt}\xi I_{wp}(F_{1,t},F_{2,t};t)
		=-\bigl(r_{X_1}(t)+r_{X_2}(t)\bigr)
		\xi I_{wp}(F_{1,t},F_{2,t};t)-\frac{1}{2}w(t),
	\end{aligned}
	\end{eqnarray}
	where $r_{X_1}(t)=\frac{f_1(t)}{F_1(t)}\quad\text{and}\quad
	r_{X_2}(t)=\frac{f_2(t)}{F_2(t)}$ are the reversed hazard rates of $X_1$ and $X_2$, respectively.
\end{thm}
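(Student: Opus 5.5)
The plan is to differentiate the definition in (\ref{eq4.1}) directly. First rewrite it in product form,
\[
\xi I_{wp}(F_{1,t},F_{2,t};t)=-\frac{1}{2}\,h(t)\int_0^t w(y)F_1(y)F_2(y)\,dy,\qquad h(t)=\frac{1}{F_1(t)F_2(t)}.
\]
Then apply the product rule, treating the two factors $h(t)$ and $G(t)=\int_0^t w(y)F_1(y)F_2(y)\,dy$ separately. For the integral factor, the fundamental theorem of calculus gives $G'(t)=w(t)F_1(t)F_2(t)$. This requires continuity of $w$ at $t$, or at least holds for almost every $t$; this is covered by the paper's standing assumption that derivatives exist.

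For the factor $h(t)$, differentiate using $F_j'(t)=f_j(t)$:
\[
h'(t)=-\frac{f_1(t)F_2(t)+F_1(t)f_2(t)}{F_1^2(t)F_2^2(t)}=-h(t)\bigl(r_{X_1}(t)+r_{X_2}(t)\bigr).
\]
Here $r_{X_j}=f_j/F_j$ are the reversed hazard rates, and we restrict to $t$ with $F_1(t)F_2(t)>0$.

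Combining the two pieces gives
\[
\frac{d}{dt}\xi I_{wp}=-\frac12 h'(t)G(t)-\frac12 h(t)G'(t).
\]
The first term equals $-(r_{X_1}+r_{X_2})\,\xi I_{wp}$, because $-\frac12 h(t)G(t)$ is exactly $\xi I_{wp}$. The second term equals $-\frac12 w(t)$, since $h(t)F_1(t)F_2(t)=1$. Together these yield (\ref{eq4.3}).

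The argument is a routine computation, and the only delicate point is the regularity needed for $G'(t)=w(t)F_1(t)F_2(t)$. I would state that it is assumed, or holds almost everywhere for measurable integrable $w$ by the Lebesgue differentiation theorem. One could instead start from the alternative representation $h(t)\bigl(I_{wp}(X_1,X_2)+s_w(t)\bigr)$, but as written that identity has a sign issue. Working directly from (\ref{eq4.1}) avoids it, so I would use the definition.
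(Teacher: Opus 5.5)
Your proposal is correct and takes the same route as the paper, which proves the result simply by differentiating the definition (\ref{eq4.1}). That is exactly your product-rule computation with $h(t)=1/(F_1(t)F_2(t))$ and $G'(t)=w(t)F_1(t)F_2(t)$. Your side remark is also right: $I_{wp}(X_1,X_2)+s_w(t)=-\frac12\int_t^\infty w(y)F_1(y)F_2(y)\,dy$, so the paper's alternative representation does not reproduce (\ref{eq4.1}), and working from the definition is the sound choice.
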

\begin{proof}
	Differentiating the DWCPEI measure in (\ref{eq4.1}), we directly get the expression in (\ref{eq4.3}).
\end{proof}
\begin{thm}
	The DWCPEI $\xi I_{wp}(F_{1,t},F_{2,t};t)$ is nondecreasing (nonincreasing) if and only if
	\begin{equation}
	\xi I_{wp}(F_{1,t},F_{2,t};t)
	\leq (\geq)\, \frac{-w(y)}{\bigl(r_{X_1}(t)+r_{X_2}(t)\bigr)}.
	\end{equation}
\end{thm}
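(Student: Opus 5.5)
The plan is to read the monotonicity of the DWCPEI directly off the differential equation of Theorem \ref{thm4.1}. Since $t\mapsto \xi I_{wp}(F_{1,t},F_{2,t};t)$ is differentiable on the set $\{t: F_1(t)F_2(t)>0\}$, it is nondecreasing (nonincreasing) there if and only if its derivative is nonnegative (nonpositive) for every such $t$. So the whole argument reduces to rearranging the sign condition on the right-hand side of (\ref{eq4.3}).

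Concretely, by (\ref{eq4.3}) the function is nondecreasing if and only if
\[
-\bigl(r_{X_1}(t)+r_{X_2}(t)\bigr)\,\xi I_{wp}(F_{1,t},F_{2,t};t)-\tfrac12 w(t)\ \ge\ 0 \quad\text{for all } t .
\]
I will assume the reversed hazard rates satisfy $r_{X_1}(t)+r_{X_2}(t)>0$; this holds wherever at least one of the densities $f_1(t),f_2(t)$ is positive. Under that assumption I divide by $-(r_{X_1}(t)+r_{X_2}(t))$, which reverses the inequality, and obtain
\[
\xi I_{wp}(F_{1,t},F_{2,t};t)\le \frac{-w(t)}{2\bigl(r_{X_1}(t)+r_{X_2}(t)\bigr)} .
\]
The nonincreasing case is the same computation with every inequality reversed. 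Each step is an equivalence, so this gives the "if and only if".

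The only real obstacle is matching constants and variables with the displayed statement. Differentiating (\ref{eq4.1}) produces the term $-\tfrac12 w(t)$, and hence the threshold
\[
-\frac{w(t)}{2\bigl(r_{X_1}(t)+r_{X_2}(t)\bigr)} .
\]
The statement instead writes the threshold as $-w(y)/\bigl(r_{X_1}(t)+r_{X_2}(t)\bigr)$. I will therefore read the argument of $w$ as $t$, since the condition is pointwise in $t$. I will also record that the factor $\tfrac12$ should appear, so that the stated bound agrees with (\ref{eq4.3}).

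A secondary point is the degenerate case $r_{X_1}(t)+r_{X_2}(t)=0$ at some $t$. There the derivative equals $-\tfrac12 w(t)\le 0$ regardless of the value of $\xi I_{wp}$, so the condition must be excluded, or handled separately, at such points.
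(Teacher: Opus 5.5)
Your argument is correct and is exactly the proof the paper has in mind (it omits the proof as simple): you read the sign of the right-hand side of (\ref{eq4.3}) from Theorem \ref{thm4.1} and rearrange it under $r_{X_1}(t)+r_{X_2}(t)>0$. You are also right that the threshold must be $-w(t)/\bigl(2(r_{X_1}(t)+r_{X_2}(t))\bigr)$: the printed bound $-w(y)/\bigl(r_{X_1}(t)+r_{X_2}(t)\bigr)$ drops the factor $\tfrac12$ from (\ref{eq4.3}), so as stated the equivalence fails in general when $w(t)>0$.
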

\begin{proof}
The proof is simple, and thus it is omitted.
\end{proof}

Denote $\mu_{F_{1}}^{w}(t)=\int_{0}^{t}w(y)\frac{F_{1}(y)}{F_{1}(t)}dy$ and $\mu_{F_{2}}^{w}(t)=\int_{0}^{t}w(y)\frac{F_{2}(y)}{F_{2}(t)}dy$. The following theorem gives a bound of the proposed DWCPEI measure. 

\begin{thm}\label{th4.3}
	Consider two RVs $X_{1}$ and $X_{2}$ with respective CDFs $F_{1}(\cdot)$ and $F_{2}(\cdot)$. Then,
	\begin{eqnarray}
		\xi I_{wp}(F_{1,t},F_{2,t};t)\ge \max\{-\frac{1}{2}\mu_{F_{1}}^{w}(t), -\frac{1}{2}\mu_{F_{2}}^{w}(t)t\}.
	\end{eqnarray}
\end{thm}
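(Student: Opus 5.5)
The bound follows from the fact that, for $0\le y\le t$, each normalized factor $F_j(y)/F_j(t)$ lies in $[0,1]$, because $F_j$ is nondecreasing. Throughout, $t$ is such that $F_1(t)>0$ and $F_2(t)>0$, as needed for (\ref{eq4.1}) to make sense, and $w(\cdot)\ge 0$.

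\textbf{Step 1 (factor bounds).} For $0\le y\le t$, monotonicity of the CDFs gives
\[
0\le \frac{F_1(y)}{F_1(t)}\le 1, \qquad 0\le \frac{F_2(y)}{F_2(t)}\le 1 .
\]

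\textbf{Step 2 (pointwise comparison).} Multiplying by $w(y)\ge 0$ and dropping one factor at a time yields, on $[0,t]$,
\[
w(y)\frac{F_1(y)F_2(y)}{F_1(t)F_2(t)}\le w(y)\frac{F_1(y)}{F_1(t)}
\quad\text{and}\quad
w(y)\frac{F_1(y)F_2(y)}{F_1(t)F_2(t)}\le w(y)\frac{F_2(y)}{F_2(t)} .
\]

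\textbf{Step 3 (integrate).} Integrating over $(0,t)$ and multiplying by $-\tfrac12$ reverses each inequality. This gives
\[
\xi I_{wp}(F_{1,t},F_{2,t};t)\ge -\tfrac12\mu^{w}_{F_1}(t)
\quad\text{and}\quad
\xi I_{wp}(F_{1,t},F_{2,t};t)\ge -\tfrac12\mu^{w}_{F_2}(t).
\]
Since $\xi I_{wp}$ is at least each of the two quantities, it is at least their maximum, which is the claimed bound.

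\textbf{Main obstacle: the stray factor $t$.} The displayed statement has a factor $t$ multiplying $\mu^{w}_{F_2}(t)$ in the second term. This appears to be a typographical slip, because the argument is symmetric in $X_1$ and $X_2$. I would prove the symmetric version above.

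If the literal form $-\tfrac12\mu^{w}_{F_2}(t)\,t$ must be kept, it follows only under an extra assumption. When $t\ge 1$, we have $-\tfrac12 t\mu^{w}_{F_2}\le -\tfrac12\mu^{w}_{F_2}$ (as $\mu^{w}_{F_2}\ge 0$), so the symmetric bound implies it. For $t<1$ the literal inequality is not implied by Steps 1–3, so it would need the restriction $t\ge 1$.

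Apart from this point, every step is routine, and integrability is automatic because the integrands are dominated by $w$ on the bounded interval $[0,t]$.
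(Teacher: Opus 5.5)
Your proof is correct and matches the paper's one-line argument: bound each normalized factor $F_j(y)/F_j(t)$ by $1$ on $[0,t]$, drop one factor at a time, integrate against $w$, and multiply by $-\tfrac12$. You are also right to read the stray $t$ on the second term as a typo; the literal bound can genuinely fail for $t<1$ (take $F_1=F_2$ concentrated near $0$), so the symmetric bound $\max\{-\tfrac12\mu^{w}_{F_1}(t),-\tfrac12\mu^{w}_{F_2}(t)\}$ is the right statement to prove.
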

\begin{proof}
	The proof of this theorem follows using the observation that $F_{1}(y)\le F_{1}(t)$ and $F_{2}(y)\le F_{2}(t),$ for $y\le t$.
\end{proof}

\begin{remark}
	The result in Theorem \ref{th4.3} reduces to Proposition $2$ of \cite{hashempour2026dynamic} if $w(y)=y.$
\end{remark}
	
	Next, we introduce the DWCPED measure.

\begin{definition}
	Suppose that $X_1$ and $X_2$ are two nonnegative continuous
	random variables with CDFs $F_1(\cdot)$ and $F_2(\cdot)$, respectively.
	Let $X_{1,t}$ and $X_{2,t}$ denote their corresponding past lifetime
	RVs at time $t$. Then, the DWCPED at time $t$ is defined as
	\begin{eqnarray}\label{eq4.5}
		\xi D_{wp}(F_{1,t},F_{2,t};t)
		&=&
		\frac{1}{2}
		\int_0^t
		w(y)
		\left(
		\frac{F_1(y)}{F_1(t)}
		-
		\frac{F_2(y)}{F_2(t)}
		\right)
		\frac{F_1(y)}{F_1(t)}
		\,dy.
	\end{eqnarray}
\end{definition}
Equation (\ref{eq4.5}) can be written in terms of the dynamic cumulative past extropy and the DCPEI as:
\begin{eqnarray}\label{eq4.6}
	\xi D_{wp}(F_{1,t},F_{2,t};t)= \xi I_{wp}(F_{1,t},F_{2,t};t)-J_{wp}(X_1;t).
\end{eqnarray}

\begin{thm}
	 The DWCPED
		$\xi D_{wp}(F_{1,t},F_{2,t};t)$ satisfies the following differential equation:
		\begin{equation}
		\frac{d}{dt}\xi D_{wp}(F_{1,t},F_{2,t};t)
		=
		-\bigl(r_{X_1}(t)+r_{X_2}(t)\bigr){\xi D}_{wp}(F_{1,t},F_{2,t};t)
		+\bigl(r_{X_1}(t)-r_{X_2}(t)\bigr)J_{wp}(X_1;t),
		\end{equation}
		where $J_{wp}(X_1;t)$ is defined in (\ref{eq1.4}).
	\end{thm}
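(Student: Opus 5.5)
The plan is to avoid differentiating \eqref{eq4.5} directly. Instead, we use the decomposition \eqref{eq4.6}, $\xi D_{wp}(F_{1,t},F_{2,t};t)=\xi I_{wp}(F_{1,t},F_{2,t};t)-J_{wp}(X_1;t)$. Here $J_{wp}(X_1;t)$ must be read as the weighted dynamic cumulative past extropy
\[
J_{wp}(X_1;t)=-\frac12\int_0^t w(y)\frac{F_1^2(y)}{F_1^2(t)}\,dy ,
\]
that is, the weighted analogue of \eqref{eq1.4} obtained by setting $F_2=F_1$ in \eqref{eq4.1}. This is the only reading under which \eqref{eq4.6} agrees with \eqref{eq4.5}, and a one-line expansion confirms it:
\[
\xi I_{wp}-J_{wp}(X_1;t)=\frac12\int_0^t w(y)\Bigl(\frac{F_1(y)}{F_1(t)}-\frac{F_2(y)}{F_2(t)}\Bigr)\frac{F_1(y)}{F_1(t)}\,dy .
\]
With this in place, the derivative of $\xi D_{wp}$ is the difference of two derivatives. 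The first is already supplied by Theorem \ref{thm4.1}.

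The key steps, in order, are as follows. First, quote Theorem \ref{thm4.1}:
\[
\frac{d}{dt}\xi I_{wp}=-(r_{X_1}(t)+r_{X_2}(t))\,\xi I_{wp}-\tfrac12 w(t).
\]
Second, differentiate $J_{wp}(X_1;t)=-\frac12 F_1^{-2}(t)\int_0^t w(y)F_1^2(y)\,dy$ by the product rule and the fundamental theorem of calculus. Since $\frac{d}{dt}F_1^{-2}(t)=-2r_{X_1}(t)F_1^{-2}(t)$, this gives
\[
\frac{d}{dt}J_{wp}(X_1;t)=-2r_{X_1}(t)J_{wp}(X_1;t)-\tfrac12 w(t).
\]
Third, subtract the two identities. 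The $-\frac12 w(t)$ terms cancel, leaving
\[
\frac{d}{dt}\xi D_{wp}=-(r_{X_1}+r_{X_2})\,\xi I_{wp}+2r_{X_1}J_{wp}(X_1;t).
\]
Finally, substitute $\xi I_{wp}=\xi D_{wp}+J_{wp}(X_1;t)$ from \eqref{eq4.6} and collect terms. The coefficient of $J_{wp}(X_1;t)$ becomes $2r_{X_1}-(r_{X_1}+r_{X_2})=r_{X_1}-r_{X_2}$, which is exactly the stated equation.

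No step is analytically hard. The main obstacle is bookkeeping about what $J_{wp}(X_1;t)$ denotes. The statement cites \eqref{eq1.4}, which is unweighted, and with that literal reading the $w(t)$ terms would not cancel. So the first thing we would do is fix the weighted interpretation above and justify it via consistency with \eqref{eq4.5}–\eqref{eq4.6}.

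We also assume, as the paper does throughout, that $F_1,F_2$ are differentiable with densities $f_1,f_2$, that $w$ is continuous at $t$, and that $F_1(t),F_2(t)>0$, so that the reversed hazard rates are defined. Under these assumptions the differentiations are routine.
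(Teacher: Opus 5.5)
Your proposal is correct and matches the paper's proof step for step: differentiate the decomposition \eqref{eq4.6}, take the DWCPEI derivative from Theorem \ref{thm4.1}, compute $\frac{d}{dt}J_{wp}(X_1;t)=-\frac{w(t)}{2}-2r_{X_1}(t)J_{wp}(X_1;t)$, and substitute back, including the final step $\xi I_{wp}=\xi D_{wp}+J_{wp}(X_1;t)$ that the paper leaves implicit. You read $J_{wp}(X_1;t)$ as the weighted dynamic cumulative past extropy $-\frac12\int_0^t w(y)F_1^2(y)/F_1^2(t)\,dy$, and the paper's proof tacitly uses exactly this, even though the statement cites the unweighted \eqref{eq1.4}.
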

	
	\begin{proof}
		Differentiating (\ref{eq4.6}) with respect to $t$, we obtain
		\begin{equation}\label{eq4.8}
		\frac{d}{dt}\xi D_{wp}(F_{1,t},F_{2,t};t)=\frac{d}{dt}\xi_{wp}(F_{1,t},F_{2,t};t)
		-\frac{d}{dt}J_{wp}(X_1;t).
		\end{equation}
		From Theorem \ref{thm4.1}, we have
		\begin{equation}\label{eq4.9}
		\frac{d}{dt}\xi I_{wp}(F_{1,t},F_{2,t};t)=-\frac{w(t)}{2}
		-\bigl(r_X(t)+r_Y(t)\bigr)\xi I_{wp}(F_{1,t},F_{2,t};t)
		\end{equation}
		and 
		\begin{equation}\label{eq4.10}
		\frac{d}{dt}J_{wp}(X_1;t)=-\frac{w(t)}{2}-2r_X(t)J_{wp}(X_1;t).
		\end{equation}
		Putting the expressions (\ref{eq4.9}) and (\ref{eq4.10}) in(\ref{eq4.8}), we get the result.
		\end{proof}
\begin{thm}
	The DWCPED
	$\xi D_{wp}(F_{1,t},F_{2,t};t)$ is nondecreasing (nonincreasing) if and only if
	\begin{equation}
	\xi D_{wp}(F_1,F_2;t)
	\leq (\geq)
	\frac{r_{X_1}(t)-r_{X_2}(t)}
	{r_{X_1}(t)+r_{X_2}(t)}
	J_{wp}(X_1;t),
	\end{equation}
	provided that $r_{X_1}(t)+r_{X_2}(t)>0$.
\end{thm}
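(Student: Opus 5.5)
The plan is to read the monotonicity directly off the differential equation for $\xi D_{wp}(F_{1,t},F_{2,t};t)$ obtained in the preceding theorem. That theorem gives
\begin{equation*}
\frac{d}{dt}\xi D_{wp}(F_{1,t},F_{2,t};t)
=-\bigl(r_{X_1}(t)+r_{X_2}(t)\bigr)\xi D_{wp}(F_{1,t},F_{2,t};t)
+\bigl(r_{X_1}(t)-r_{X_2}(t)\bigr)J_{wp}(X_1;t).
\end{equation*}
Since the DWCPED is differentiable in $t$ under the standing assumptions, it is nondecreasing (nonincreasing) on an interval if and only if this derivative is $\ge 0$ ($\le 0$) at every $t$ there. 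So the statement should be read as a pointwise condition holding for all $t$ in the range considered.

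Setting the right-hand side $\ge 0$ gives
\begin{equation*}
\bigl(r_{X_1}(t)+r_{X_2}(t)\bigr)\xi D_{wp}(F_{1,t},F_{2,t};t)\le \bigl(r_{X_1}(t)-r_{X_2}(t)\bigr)J_{wp}(X_1;t).
\end{equation*}
Because $r_{X_1}(t)+r_{X_2}(t)>0$ by hypothesis, we may divide by it without reversing the inequality. This yields exactly the stated condition $\xi D_{wp}\le \frac{r_{X_1}(t)-r_{X_2}(t)}{r_{X_1}(t)+r_{X_2}(t)}J_{wp}(X_1;t)$. The nonincreasing case is identical with all inequalities reversed. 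Every step is an equivalence, so the "if and only if" follows.

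There is no serious obstacle; this is the main point to check. The positivity hypothesis is what makes the division legitimate, and the reversed hazard rates are nonnegative, so their sum is positive except where both densities vanish, which is the case the hypothesis excludes. We should also confirm that the derivative used in the preceding theorem is valid, which requires differentiability of $F_1$ and $F_2$ and continuity of $w$ at $t$. The notation $\xi D_{wp}(F_1,F_2;t)$ in the statement is understood as $\xi D_{wp}(F_{1,t},F_{2,t};t)$.
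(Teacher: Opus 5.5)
Your proposal is correct and is the argument the paper intends; the paper omits the proof as simple. You set the right-hand side of $\frac{d}{dt}\xi D_{wp}(F_{1,t},F_{2,t};t)=-\bigl(r_{X_1}(t)+r_{X_2}(t)\bigr)\xi D_{wp}(F_{1,t},F_{2,t};t)+\bigl(r_{X_1}(t)-r_{X_2}(t)\bigr)J_{wp}(X_1;t)$ to be $\ge 0$ ($\le 0$) and divide by $r_{X_1}(t)+r_{X_2}(t)>0$. Reading the condition as holding for every $t$ in the range considered is right, and the differential equation you rely on checks out: substituting $\xi I_{wp}=\xi D_{wp}+J_{wp}(X_1;t)$ into $-(r_{X_1}+r_{X_2})\xi I_{wp}+2r_{X_1}J_{wp}(X_1;t)$ reproduces it.
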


\begin{proof}
	The proof is simple. Thus, it is omitted for brevity.
\end{proof}

Below, we use the concept of reversed hazard rate ordering (see \cite{shaked2007stochastic}) to establish an inequality for the DWCPED.
\begin{thm}
	If $X_1\ge_{rhr} X_2,$ then $\xi D_{wp}(F_{1,t},F_{2,t};t)\ge \xi D_{wp}(F_{2,t},F_{1,t};t).$
\end{thm}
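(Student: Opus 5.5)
The plan is to reduce the comparison to a single pointwise inequality between the two conditional distribution functions on $[0,t]$. Write
\[
u_t(y)=\frac{F_1(y)}{F_1(t)},\qquad v_t(y)=\frac{F_2(y)}{F_2(t)},\qquad 0\le y\le t .
\]
Both functions take values in $[0,1]$ and equal $1$ at $y=t$.

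By (\ref{eq4.5}),
\[
\xi D_{wp}(F_{1,t},F_{2,t};t)=\frac12\int_0^t w(y)\bigl(u_t(y)-v_t(y)\bigr)u_t(y)\,dy .
\]
Interchanging the roles of the two distributions gives
\[
\xi D_{wp}(F_{2,t},F_{1,t};t)=\frac12\int_0^t w(y)\bigl(v_t(y)-u_t(y)\bigr)v_t(y)\,dy .
\]
Subtracting the second from the first, the cross terms $u_tv_t$ cancel. Equivalently, one can use (\ref{eq4.6}) together with the symmetry of $\xi I_{wp}$. Either way,
\[
\xi D_{wp}(F_{1,t},F_{2,t};t)-\xi D_{wp}(F_{2,t},F_{1,t};t)=\frac12\int_0^t w(y)\bigl(u_t^2(y)-v_t^2(y)\bigr)\,dy=J_{wp}(X_2;t)-J_{wp}(X_1;t).
\]
Since $w\ge0$ and $u_t,v_t\ge0$, the sign of this difference is settled once we know which of $u_t$ and $v_t$ is pointwise larger on $[0,t]$.

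The second step converts the hypothesis $X_1\ge_{rhr}X_2$ into that pointwise comparison. The reversed hazard rate order is characterised by monotonicity of the ratio $F_1(y)/F_2(y)$. Comparing this ratio at $y\le t$ with its value at $t$ and rearranging gives a comparison of $F_1(y)F_2(t)$ with $F_2(y)F_1(t)$, which is exactly a comparison of $u_t(y)$ with $v_t(y)$. For the stated direction ``$\ge$'' we need $u_t(y)\ge v_t(y)$ on $[0,t]$. Then $u_t^2\ge v_t^2$, the integral above is nonnegative, and the theorem follows immediately.

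The main obstacle is this second step, specifically which way the ratio is monotone. Under the convention of \cite{shaked2007stochastic}, $X_1\ge_{rhr}X_2$ means $F_1(y)/F_2(y)$ is nondecreasing. That yields $F_1(y)/F_2(y)\le F_1(t)/F_2(t)$, i.e. $u_t\le v_t$, and hence the \emph{opposite} inequality. So the proof as worded only goes through if the ordering is taken in the sense that $F_1(y)/F_2(y)$ is nonincreasing on $[0,t]$, which is equivalent to $r_{X_1}\le r_{X_2}$ there.

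I will therefore state explicitly that this is the reading of $X_1\ge_{rhr}X_2$ being used. Under it, $F_1(y)F_2(t)\ge F_2(y)F_1(t)$ for all $y\le t$, and the identity above finishes the proof. I will also remark that under the Shaked--Shanthikumar convention the conclusion holds with the inequality reversed.
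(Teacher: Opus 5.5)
Your proposal is correct and follows essentially the paper's argument. You use (\ref{eq4.6}) and the symmetry of $\xi I_{wp}$ to reduce the difference to $J_{wp}(X_2;t)-J_{wp}(X_1;t)$, then obtain the pointwise bound $F_2(y)/F_2(t)\le F_1(y)/F_1(t)$ on $[0,t]$ from monotonicity of the ratio of the CDFs.

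Your remark on the convention is also accurate. The paper's proof explicitly reads $X_1\ge_{rhr}X_2$ as $r_{X_1}\le r_{X_2}$ (i.e., $F_2/F_1$ increasing), which is the reverse of the Shaked--Shanthikumar definition it cites. Under the standard convention the stated inequality holds in the opposite direction.
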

\begin{proof}
Since $X_1\geq_{rhr}X_2$, then $r_{X_1}(t)\le r_{X_2}(t).$ 
By the definition of the reversed hazard rate order, the ratio $\frac{F_2(y)}{F_1(y)}$ is increasing in $y$. For $0\le y\leq t$,
		\begin{eqnarray}
		\frac{F_2(y)}{F_1(y)}\leq \frac{F_2(t)}{F_1(t)}\implies \frac{F_2(y)}{F_2(t)}\leq \frac{F_1(y)}{F_1(t)}&\implies&
		w(y)\left(\frac{F_2(y)}{F_2(t)}\right)^2 \leq  w(y)\left(\frac{F_1(y)}{F_1(t)}\right)^2 \nonumber \\&\implies& J_{wp}(X_2;t)\ge J_{wp}(X_1;t).
		\end{eqnarray}
		Now, from (\ref{eq4.6}) we have
		\begin{eqnarray}\xi D_{wp}(F_{1,t},F_{2,t};t)-\xi D_{wp}(F_{2,t},F_{1,t};t) &=& J_{wp}(X_2;t)- J_{wp}(X_1;t)\ge0\nonumber \\ \implies \xi D_{wp}(F_1,F_2;t)\ge \xi D_{wp}(F_2,F_1;t).
			\end{eqnarray}
			This completes the proof.
\end{proof}

Below, we have proposed kernel-based estimation procedures for estimating DWCPEI and DWCPED. It can be mentioned here that one can also estimate the same using the EDF-based technique, which has been illustrated in the case of estimating WCPEIR. The EDF-based technique is omitted for the sake of conciseness.

\subsection{Kernel estimation of the DWCPEI and DWCPED}

Let $Y_1,\cdots,Y_n$ and $Z_1,\cdots,Z_n$ be two
independent random samples from populations having
CDFs $F_1(\cdot)$ and $F_2(\cdot)$, respectively. Let $K(\cdot)$
denote a kernel function and let $b_n>0$ be a bandwidth satisfying
$b_n\to0$ as $n\to\infty$. The kernel estimators of the distribution functions $F_1(\cdot)$ and $F_2(\cdot)$ are
given by
\begin{equation}
\widehat{F}_{1n}(y)=\frac{1}{n}\sum_{i=1}^{n}K\left(\frac{y-Y_i}{b_n}\right)
~~~\text{and}~~~
\widehat{F}_{2n}(y)=\frac{1}{n}\sum_{i=1}^{n}K\left(\frac{y-Z_i}{b_n}\right).
\end{equation}
Consequently, the kernel estimators of the corresponding dynamic
past distribution functions are
\begin{equation}\label{eq4.14}
\widehat{F}_{1n,t}(y)=\frac{\widehat{F}_{1n}(y)}{\widehat{F}_{1n}(t)}
~~~\text{and}~~~
\widehat{F}_{2n,t}(y)=\frac{\widehat{F}_{2n}(y)}{\widehat{F}_{2n}(t)},
\qquad 0\leq x\leq t.
\end{equation}
\subsubsection{Kernel estimator of DWCPEI}
The DWCPEI measure is defined by
\[
\xi I_{wp}(F_{1,t},F_{2,t};t)=-\frac{1}{2}\int_0^tw(y)F_{1,t}(y)F_{2,t}(y)\,dy.
\]
Replacing $F_{1,t}(\cdot)$ and $F_{2,t}(\cdot)$ by their kernel estimates gives the
following plug-in estimator:
\begin{equation}
\widehat{\xi I}_{wp}(\widehat{F}_{1n,t},\widehat{F}_{2n,t};t)=-\frac{1}{2}\int_0^tw(y)\widehat{F}_{1n,t}(y)
\widehat{F}_{2n,t}(y)\,dy.
\end{equation}
Substituting the kernel estimates explicitly, we obtain
\begin{equation}
\begin{aligned}
	\widehat{\xi I}_{wp}(\widehat{F}_{1n,t},\widehat{F}_{2n,t};t)
	=-\frac{1}{2}\int_0^tw(y)
	\left[
	\frac{
		\displaystyle\sum_{i=1}^{n}
		K\left(\frac{y-Y_i}{b_n}\right)}
	{
		\displaystyle\sum_{i=1}^{n}
		K\left(\frac{t-Y_i}{b_n}\right)}
	\right]
	\left[
	\frac{
		\displaystyle\sum_{j=1}^{n}
		K\left(\frac{y-Z_j}{b_n}\right)}
	{
		\displaystyle\sum_{j=1}^{n}
		K\left(\frac{t-Z_j}{b_n}\right)}
	\right]
	\,dy.
\end{aligned}
\end{equation}
\subsubsection{Kernel estimator of DWCPED}
The DWCPED is defined as
\begin{equation*}
\widehat{\xi D}_{wp}(\widehat{F}_{1n,t},\widehat{F}_{2n,t};t)
=\frac{1}{2}\int_0^t w(y)\left\{F_{1,t}(y)-F_{2,t}(y)\right\}
F_{1,t}(y)\,dy.
\end{equation*}
Using the kernel estimates of the dynamic distribution functions which is in (\ref{eq4.14}), the corresponding plug-in estimator of DWCPED is obtained as
\begin{equation*}
\widehat{\xi D}_{wp}(\widehat{F}_{1n,t},\widehat{F}_{2n,t};t)
=\frac{1}{2}\int_0^tw(y)\left\{\widehat{F}_{1n,t}(y)
-\widehat{F}_{2n,t}(y)
\right\}
\widehat{F}_{1n,t}(y)\,dy.
\end{equation*}
Thus,
\begin{equation}
	\widehat{\xi D}_{wp}(\widehat{F}_{1n,t},\widehat{F}_{2n,t};t)
	=\frac{1}{2}\int_0^t
	w(y)\left[\frac{\widehat{F}_{1n}(y)}
	{\widehat{F}_{1n}(t)}
	-
	\frac{\widehat{F}_{2n}(y)}
	{\widehat{F}_{2n}(t)}
	\right]
	\frac{\widehat{F}_{1n}(y)}
	{\widehat{F}_{1n}(t)}
	\,dy.
\end{equation}
On substituting the kernel estimators, we get
\begin{equation}
	\widehat{\xi D}_{wp}(\widehat{F}_{1n,t},\widehat{F}_{2n,t};t)=\frac{1}{2}
	\int_0^t w(y)\left[ \frac{
		\displaystyle\sum_{i=1}^{n}
		K\left(\frac{y-Y_i}{b_n}\right)}
	{
		\displaystyle\sum_{i=1}^{n}
		K\left(\frac{t-Y_i}{b_n}\right)} -
		\frac{
		\displaystyle\sum_{j=1}^{n}
		K\left(\frac{y-Z_j}{b_n}\right)}
		{
		\displaystyle\sum_{j=1}^{n}
		K\left(\frac{t-Z_j}{b_n}\right)}\right] \frac{
		\displaystyle\sum_{i=1}^{n}
		K\left(\frac{y-Y_i}{b_n}\right)}
		{
		\displaystyle\sum_{i=1}^{n}
		K\left(\frac{t-Y_i}{b_n}\right)}
		\,dy.
\end{equation}
The above estimator is obtained by replacing the unknown distribution functions in the definition of DWPED by their kernel-based estimates.

\subsection{Simulation Study}
To assess the finite-sample performance of $\widehat{\xi I}_{wp}(\widehat{F}_{1n,t},\widehat{F}_{2n,t};t)$ and $\widehat{\xi D}_{wp}(\widehat{F}_{1n,t},\widehat{F}_{2n,t};t)$, a Monte Carlo simulation study is conducted. We consider independent samples from $X_1\sim\operatorname{Exp}(1)$ and $X_2\sim\operatorname{Exp}(2)$ with the weight function $w(y)=e^{-y}$. The sample sizes were taken as
$n=20,50,100,200,$ and $300$, while $t$ was varied over
$0.25,0.50,\ldots,2.00$. For each combination of $n$ and $t$, $10,000$ Monte Carlo replications were performed in $R$ software. The distribution functions are estimated using a Gaussian kernel estimator with Silverman's bandwidth, and numerical integration was carried out using the trapezoidal rule. The AB and MSE values are calculated by comparing the average estimated values with the corresponding theoretical values. The simulated result is reported in Table \ref{tab:7}.

The results show that the proposed estimators perform satisfactorily as the sample size increases. For $\widehat{\xi I}_{wp}(\widehat{F}_{1n,t},\widehat{F}_{2n,t};t)$ both the AB and MSE decrease consistently with increasing $n$ for all considered values of $t$. For example, at $t=1.5$, the AB decreases from $0.003522$ for $n=20$ to $0.002189$ for $n=300$, while the MSE decreases from $0.000298$ to $0.000027$. Similar observation for $\widehat{\xi D}_{wp}(F_{1,t},F_{2,t};t)$ is noticed. Overall, the results indicate that both kernel estimators exhibit improved accuracy and decreasing MSE with increasing sample size.
\begin{table}[htbp]
	\centering
	\caption{Simulation results for $\widehat{\xi D}_{wp}(\widehat{F}_{1n,t},\widehat{F}_{2n,t};t)$ and $\widehat{\xi I}_{wp}(\widehat{F}_{1n,t},\widehat{F}_{2n,t};t)$ estimates (Exp(1) vs Exp(2)).}
	\label{tab:7}
	\begin{tabular}{ccccrrrrrr}
		\toprule
		\multirow{2}{*}{$t$} & \multirow{2}{*}{$n$} & \multicolumn{4}{c}{$\widehat{\xi }_{wp}(\widehat{F}_{1n,t},\widehat{F}_{2n,t};t)$} & \multicolumn{4}{c}{$\widehat{\xi I}_{wp}(\widehat{F}_{1n,t},\widehat{F}_{2n,t};t)$} \\
		\cmidrule(lr){3-6} \cmidrule(lr){7-10}
		& & True & Estimate & AB & MSE & True & Estimate & AB & MSE \\
		\midrule
		0.25 & 20 & -0.001146 & 0.007508 & 0.008655 & 0.000127 & -0.038013 & -0.052458 & 0.014446 & 0.000246 \\
		0.25 & 50 & -0.001146 & 0.006930 & 0.008076 & 0.000086 & -0.038013 & -0.050824 & 0.012811 & 0.000178 \\
		0.25 & 100 & -0.001146 & 0.006219 & 0.007365 & 0.000065 & -0.038013 & -0.048889 & 0.010877 & 0.000125 \\
		0.25 & 200 & -0.001146 & 0.005376 & 0.006522 & 0.000047 & -0.038013 & -0.046920 & 0.008907 & 0.000083 \\
		0.25 & 300 & -0.001146 & 0.004788 & 0.005934 & 0.000038 & -0.038013 & -0.045808 & 0.007796 & 0.000063 \\
		\midrule
		0.50 & 20 & -0.004015 & 0.004338 & 0.008354 & 0.000159 & -0.069594 & -0.076533 & 0.006939 & 0.000122 \\
		0.50 & 50 & -0.004015 & 0.002525 & 0.006541 & 0.000077 & -0.069594 & -0.075272 & 0.005678 & 0.000060 \\
		0.50 & 100 & -0.004015 & 0.001053 & 0.005068 & 0.000042 & -0.069594 & -0.073830 & 0.004237 & 0.000032 \\
		0.50 & 200 & -0.004015 & -0.000282 & 0.003733 & 0.000022 & -0.069594 & -0.072591 & 0.002997 & 0.000016 \\
		0.50 & 300 & -0.004015 & -0.000997 & 0.003019 & 0.000014 & -0.069594 & -0.071938 & 0.002344 & 0.000010 \\
		\midrule
		1.25 & 20 & -0.016488 & -0.014096 & 0.002392 & 0.000164 & -0.135404 & -0.132638 & 0.002766 & 0.000236 \\
		1.25 & 50 & -0.016488 & -0.015180 & 0.001307 & 0.000067 & -0.135404 & -0.132803 & 0.002600 & 0.000101 \\
		1.25 & 100 & -0.016488 & -0.015579 & 0.000909 & 0.000035 & -0.135404 & -0.133150 & 0.002254 & 0.000054 \\
		1.25 & 200 & -0.016488 & -0.015965 & 0.000523 & 0.000018 & -0.135404 & -0.133428 & 0.001975 & 0.000031 \\
		1.25 & 300 & -0.016488 & -0.016062 & 0.000425 & 0.000012 & -0.135404 & -0.133555 & 0.001848 & 0.000022 \\
		\midrule
		1.50 & 20 & -0.020559 & -0.018043 & 0.002517 & 0.000192 & -0.150038 & -0.146516 & 0.003522 & 0.000298 \\
		1.50 & 50 & -0.020559 & -0.019464 & 0.001095 & 0.000076 & -0.150038 & -0.146737 & 0.003301 & 0.000131 \\
		1.50 & 100 & -0.020559 & -0.019874 & 0.000685 & 0.000040 & -0.150038 & -0.147121 & 0.002916 & 0.000072 \\
		1.50 & 200 & -0.020559 & -0.020221 & 0.000338 & 0.000020 & -0.150038 & -0.147598 & 0.002440 & 0.000039 \\
		1.50 & 300 & -0.020559 & -0.020245 & 0.000314 & 0.000014 & -0.150038 & -0.147849 & 0.002189 & 0.000027 \\
		\midrule
		1.75 & 20 & -0.024233 & -0.021936 & 0.002296 & 0.000214 & -0.161937 & -0.157571 & 0.004366 & 0.000365 \\
		1.75 & 50 & -0.024233 & -0.023082 & 0.001151 & 0.000084 & -0.161937 & -0.158313 & 0.003624 & 0.000158 \\
		1.75 & 100 & -0.024233 & -0.023549 & 0.000684 & 0.000043 & -0.161937 & -0.158754 & 0.003183 & 0.000084 \\
		1.75 & 200 & -0.024233 & -0.023912 & 0.000321 & 0.000022 & -0.161937 & -0.159365 & 0.002572 & 0.000046 \\
		1.75 & 300 & -0.024233 & -0.023959 & 0.000274 & 0.000015 & -0.161937 & -0.159501 & 0.002436 & 0.000032 \\
		\midrule
		2.00 & 20 & -0.027438 & -0.024797 & 0.002642 & 0.000231 & -0.171549 & -0.166875 & 0.004674 & 0.000412 \\
		2.00 & 50 & -0.027438 & -0.026128 & 0.001310 & 0.000090 & -0.171549 & -0.167543 & 0.004006 & 0.000186 \\
		2.00 & 100 & -0.027438 & -0.026770 & 0.000669 & 0.000045 & -0.171549 & -0.167941 & 0.003608 & 0.000098 \\
		2.00 & 200 & -0.027438 & -0.027048 & 0.000391 & 0.000023 & -0.171549 & -0.168659 & 0.002890 & 0.000052 \\
		2.00 & 300 & -0.027438 & -0.027147 & 0.000292 & 0.000015 & -0.171549 & -0.168970 & 0.002579 & 0.000036 \\
		\bottomrule
	\end{tabular}
\end{table}
\section{Conclusion}\label{sec5}
In this paper, we have developed a unified framework of weighted cumulative past extropy-based information measures for comparing two lifetime distributions. By introducing a nonnegative weight function in the cumulative past extropy framework, we proposed the WCPEI and WCPED measures. The weighting mechanism provides additional flexibility in emphasizing different regions of the lifetime distribution and allows several existing measures to be recovered as special cases. We established a number of theoretical properties of the proposed measures and investigated their relationships. These results demonstrate that the proposed measures are capable of capturing meaningful differences between lifetime distributions from the perspective of past lifetime uncertainty. In addition, suitable nonparametric estimators were developed, and their finite sample behavior was examined through simulation studies for different distributions and sample sizes. The simulation results indicate that the proposed estimators provide satisfactory performance, with their AB and  generally decreasing as the sample size increases. The practical applicability of the proposed methodology was illustrated through two different applications. First, an extropy based goodness-of-fit test for the uniform distribution was constructed using the proposed divergence measure. Its empirical power was compared with several commonly used uniformity tests, including the Kolmogorov--Smirnov, Anderson--Darling, Cramér--von Mises, Zamanzade, and another extropy-based test. The results show that the proposed test is particularly effective for detecting certain centered alternatives, especially those represented by the $C_i$ family. Although the proposed test does not uniformly dominate all competing procedures across every alternative, its strong performance for the centered alternatives highlights the usefulness of the extropy based approach in situations where departures from uniformity are concentrated around the center of the distribution.
Second, the proposed measures were applied to image analysis to quantify differences between pixel-value distributions at different image resolutions. The results demonstrate that the proposed measure can effectively assess the information discrepancy caused by image resolution reduction. In particular, the relatively small values of the measure for moderate reductions in resolution indicate that the overall distribution of pixel intensities can remain largely preserved, illustrating the potential of the proposed framework as a distribution-based tool for image comparison.

To extend the proposed methodology to situations in which the distributional behavior changes with time, we further introduced the DWCPEI and DWCPED measures. These dynamic measures are constructed by conditioning on the elapsed lifetime up to a specified time point and therefore provide a time dependent characterization of the discrepancy between past lifetime distributions. Their main mathematical properties were established, and corresponding estimation procedures were developed. Simulation studies under different distributional settings and sample sizes were conducted to investigate the finite-sample performance of the proposed estimators.

\section*{Acknowledgements}  
The authors gratefully acknowledge the financial support received during this research. Bighneswar Sahoo thanks the University Grants Commission (UGC), India, for providing financial assistance under Award No. 231620132839. Suchandan Kayal gratefully acknowledges the research funding provided by the Anusandhan National Research Foundation (ANRF) under Grant No. CRG/2023/000157. The authors also sincerely thank the Department of Mathematics, National Institute of Technology Rourkela, India, for providing the necessary research facilities and support to carry out this work.

 \section*{Conflicts of interest} All the authors declare no conflict of interest. 
  
    \section*{Data Availability Statement} The data sets used in this manuscript are openly available. 
  
	\bibliography{ref}

@article{kerridge1961inaccuracy,
	title={Inaccuracy and inference},
	author={Kerridge, David F},
	journal={Journal of the Royal Statistical Society. Series B (Methodological)},
	pages={184--194},
	year={1961},
	publisher={JSTOR}
}

@article{shannon1948mathematical,
	title={A mathematical theory of communication},
	author={Shannon, Claude Elwood},
	journal={The Bell System Technical Journal},
	volume={27},
	number={3},
	pages={379--423},
	year={1948},
	publisher={Nokia Bell Labs}
}

@book{shaked2007stochastic,
	title={Stochastic orders},
	author={Shaked, Moshe and Shanthikumar, J George},
	year={2007},
	publisher={Springer}
}

@article{kullback1951information,
	title={On information and sufficiency},
	author={Kullback, Solomon and Leibler, Richard A},
	journal={The Annals of Mathematical Statistics},
	volume={22},
	number={1},
	pages={79--86},
	year={1951},
	publisher={JSTOR}
}

@article{lad2015extropy,
	title={Extropy: Complementary dual of entropy},
	author={Lad, Frank and Sanfilippo, Giuseppe and Agro, Gianna},
	journal={Statistical Science},
	volume={30},
	number={1},
	pages={40--58},
	year={2015},
	publisher={JSTOR}
}

@article{qiu2017extropy,
	title={The extropy of order statistics and record values},
	author={Qiu, Guoxin},
	journal={Statistics \& Probability Letters},
	volume={120},
	pages={52--60},
	year={2017},
	publisher={Elsevier}
}

@article{kundu2023cumulative,
	title={On cumulative residual (past) extropy of extreme order statistics},
	author={Kundu, Chanchal},
	journal={Communications in Statistics-Theory and Methods},
	volume={52},
	number={16},
	pages={5848--5865},
	year={2023},
	publisher={Taylor \& Francis}
}

@article{hashempour2024new,
	title={A new measure of inaccuracy for record statistics based on extropy},
	author={Hashempour, Majid and Mohammadi, Morteza},
	journal={Probability in the Engineering and Informational Sciences},
	volume={38},
	number={1},
	pages={207--225},
	year={2024},
	publisher={Cambridge University Press}
}

@article{saranya2025inaccuracy,
	title={Inaccuracy and divergence measures based on survival extropy with applications in testing and image analysis},
	author={Saranya, P and Sunoj, SM},
	journal={Japanese Journal of Statistics and Data Science},
	volume={8},
	number={2},
	pages={921--947},
	year={2025},
	publisher={Springer}
}

@article{saranya2026relative,
	title={Relative and divergence measures based on extropy: dynamic forms and properties},
	author={Saranya, P and Sunoj, SM},
	journal={Probability in the Engineering and Informational Sciences},
	pages={1--23},
	year={2026},
	publisher={Cambridge University Press}
}

@article{saranya2024relative,
	title={On relative cumulative extropy, its residual (past) measures and their applications in estimation and testing},
	author={Saranya, P and Sunoj, SM},
	journal={Journal of the Indian Society for Probability and Statistics},
	volume={25},
	number={1},
	pages={199--225},
	year={2024},
	publisher={Springer}
}

@article{hashempour2024dynamic,
	title={On dynamic cumulative past inaccuracy measure based on extropy},
	author={Hashempour, Majid and Mohammadi, Morteza},
	journal={Communications in Statistics-Theory and Methods},
	volume={53},
	number={4},
	pages={1294--1311},
	year={2024},
	publisher={Taylor \& Francis}
}

@article{an1933sulla,
	title={Sulla determinazione empirica di una legge didistribuzione},
	author={Kolmogorov, An},
	journal={Giorn Dell'inst Ital Degli Att},
	volume={4},
	pages={89--91},
	year={1933}
}

@article{hashempour2026dynamic,
	title={Dynamic perspectives on weighted past cumulative extropy inaccuracy: characterizations and applications},
	author={Hashempour, Majid and Mohammadi, Morteza},
	journal={Statistics},
	pages={1--29},
	year={2026},
	publisher={Taylor \& Francis}
}

@article{anderson1954test,
	title={A test of goodness of fit},
	author={Anderson, Theodore W and Darling, Donald A},
	journal={Journal of the American statistical association},
	volume={49},
	number={268},
	pages={765--769},
	year={1954},
	publisher={Taylor \& Francis}
}

@article{cramer1928composition,
	title={On the composition of elementary errors: First paper: Mathematical deductions},
	author={Cram{\'e}r, Harald},
	journal={Scandinavian Actuarial Journal},
	volume={1928},
	number={1},
	pages={13--74},
	year={1928},
	publisher={Taylor \& Francis}
}

@article{von1945wahrscheinlichkeitsrechnung,
	title={Wahrscheinlichkeitsrechnung und ihre Anwendung in der Statistik und theoretischen Physik},
	author={Von Mises, Richard},
	journal={Deuticke, Leipzig and Vienna},
	year={1931}
}

@article{zamanzade2015testing,
	title={Testing uniformity based on new entropy estimators},
	author={Zamanzade, Ehsan},
	journal={Journal of Statistical Computation and Simulation},
	volume={85},
	number={16},
	pages={3191--3205},
	year={2015},
	publisher={Taylor \& Francis}
}

@article{qiu2018extropy,
	title={Extropy estimators with applications in testing uniformity},
	author={Qiu, Guoxin and Jia, Kai},
	journal={Journal of Nonparametric Statistics},
	volume={30},
	number={1},
	pages={182--196},
	year={2018},
	publisher={Taylor \& Francis}

}
	
\end{document}